
\documentclass[a4paper]{amsart}

\usepackage[margin=2.5cm]{geometry}

\usepackage{amssymb}
\usepackage{amsthm}

\usepackage[matrix,arrow,curve]{xy}
\usepackage{algorithm}
\usepackage[noend]{algorithmic}



\newcommand{\field}[1][]{\mathbb{F}_{#1}}
\newcommand{\bfield}{L}
\newcommand{\subfield}{L^\sigma}

\newcommand{\Aut}[2][]{\operatorname{Aut}_{#1}(#2)}
\newcommand{\norm}[2]{N_{#1}(#2)}
\DeclareMathOperator{\rank}{rk}
\newcommand{\lclm}[1]{\left[#1\right]_\ell}

\newcommand{\gcrd}[1]{\left(#1\right)_r}
\newcommand{\matrixring}[2]{\mathcal{M}_{#1}(#2)}
\newcommand{\tovector}{\mathfrak{v}}


\usepackage{epstopdf}
\usepackage{subfigure}

\theoremstyle{plain}
\newtheorem{theorem}{Theorem}[section]

\newtheorem{lemma}[theorem]{Lemma}
\newtheorem{proposition}[theorem]{Proposition}

\theoremstyle{definition}
\newtheorem{definition}{Definition}
\newtheorem{example}{Example}

\theoremstyle{remark}
\newtheorem{remark}{Remark}

\begin{document}



\title{Peterson-Gorenstein-Zierler algorithm for skew RS codes}

\author[J. G\'omez-Torrecillas]{Jos\'e G\'omez-Torrecillas}
\email{gomezj@ugr.es}
\address{CITIC and Department of Algebra, University of Granada, Spain}
\author{F. J. Lobillo}
\email{jlobillo@ugr.es}  
\address{CITIC and Department of Algebra, University of Granada, Spain}
\author[G. Navarro]{Gabriel Navarro}
\email{gnavarro@ugr.es}
\address{CITIC and Department of Computer Sciences and AI, University of Granada, Spain}

\maketitle

\begin{abstract}
We design a non-commutative version of the Peterson-Gorenstein-Zierler decoding algorithm for a class of codes that we call skew RS codes.  These codes are left ideals of a quotient of a skew polynomial ring, which endow them of a sort of non-commutative cyclic structure. Since we work over an arbitrary field, our techniques may be applied both to linear block codes and convolutional codes. In particular, our decoding algorithm applies for block codes beyond the classical cyclic case.\end{abstract}



\section{Introduction}
From a pure mathematical perspective, a linear block code is a vector subspace $\mathcal{C}$ of $\mathbb{F}^n$, for some finite field $\mathbb{F}$.  In order to get codes with good properties, it is usual to endow both $\mathbb{F}^n$ and $\mathcal{C}$ with additional algebraic structures. This is the case of BCH codes \cite{Hocquenghem:1959, Bose/Chaudhuri:1960}, which become cyclic block codes, that is, ideals of the quotient ring $\mathbb{F}[x]/\langle x^n -1 \rangle $ of the polynomial ring $\mathbb{F}[x]$ by the ideal generated by $x^n - 1$, see for instance \cite{Huffman/Pless:2010}. Thus, they must be constructed by carefully selecting some factors of the polynomial $x^n -1$. The reward is, for instance, that these codes are built with designed Hamming distance, and that there are several efficient decoding algorithms taking advantage of their rich algebraic structure. For example, the Peterson-Gorenstein-Zierler algorithm, that makes use of linear algebra techniques; the Sugiyama algorithm, which refines some steps by means of the polynomial arithmetic; or the Sudan-Guruswami algorithm for list decoding, which can be applied to the subclass of Reed-Solomon codes.

It is observed in  \cite{Boucher/Geiselmann/Ulmer:2007, Boucher/Ulmer:2009} that the number of potential ``cyclic'' codes $\mathcal{C}$ of fixed length \(n\) is substantially increased if $\mathcal{C}$ is required to be a left ideal of a suitable quotient ring of a skew polynomial ring $\mathbb{F}[x;\sigma]$, where $\sigma$ is an automorphism of $\mathbb{F}$. It is worth to mention that, since \cite{Piret:1975, Piret:1976}, non-commutative rings are used to endow convolutional codes with non trivial cyclic structures, see also  \cite{Ross:1979, Gluesing/Schmale:2004}. Recently, in \cite{GLNSugi}, a non-commutative version of Sugiyama's decoding algorithm \cite{Sugiyama:1975} has been designed for convolutional codes built as certain left ideals of a simple ring of dimension $n$ over the fraction  field $\mathbb{F}(z)$ of the polynomial ring $\mathbb{F}[z]$, where $z$ represents the delay operator. Concretely, see \cite{GLNSCCC}, SCCC codes are defined as left ideals of the ring $\mathbb{F}(z)[x;\sigma]/\langle x^n - 1 \rangle$, where $\sigma$ is an $\mathbb{F}$--automorphism of order $n$ of the field $\mathbb{F}(z)$. 

In the present paper, a version of Peterson-Gorenstein-Zierler Algorithm for skew Reed-Solomon (RS) codes is proposed. Since we intend to cover both block and convolutional codes, we work with an abstract field $L$. Thus, we define a skew RS code as a left ideal, with generator carefully chosen, of a factor ring $L[x;\sigma]/\langle x^n -1 \rangle$, where $\sigma$ is an automorphism of $L$ of (finite) order $n$. Of course, a verbatim translation of the original Peterson-Gorenstein-Zierler Algorithm \cite{Peterson:1960, Gorenstein/Zierler:1961} makes no sense in our non-commutative context. Actually, even in the finite field block case, a skew cyclic code need not to be a cyclic code. Therefore our algorithm provides an efficient decoding procedure which applies beyond the classical Peterson-Gorenstein-Zierler algorithm. Our version shares with the block (commutative) one a general scheme where linear algebra tools and arguments play an important role, as, for instance, handling syndromes and computing errors. We also exploit the algebraic properties of the skew polynomial ring $L[x;\sigma]$ and of the field extension $L^\sigma \subset L$, which ``encode'' the skew cyclic structure. 

The paper is structured as follows. Section \ref{techprop} is devoted to state some technical results needed for proving the properties of skew RS codes and the steps of Algorithm \ref{PGZ}. In particular, we would like to highlight what we have named the Circulant Lemma (Lemma \ref{circulantlemma}) which is the key-tool that ensures the correctness of most of the methods developed in the paper. In Section \ref{RSsec} we define skew cyclic codes over a field and give a systematic procedure for generating them. Then we describe the subclass of skew RS codes and determine their Hamming distance. Section \ref{PGZsec} is devoted to state and prove a Peterson-Gorenstein-Zierler algorithm for decoding skew RS codes. Finally, in Section \ref{secexam}, we provide a selection of examples aiming to illustrate the wide range of codes to which this algorithm can be applied.

\section{Some technical results}\label{techprop}

Throughout this paper \(\bfield\) will denote a field, \(\sigma \in \Aut{L}\) a field automorphism of order \(n\), and \(\bfield^\sigma\), the invariant subfield. In this section we prove some technical results which will be used subsequently. We begin by what we call the \emph{Circulant Lemma}, which is a particular case of \cite[Corollary 4.13]{Lam/Leroy:1988}. An elementary proof can  be  found in \cite{GLNSugi}.
We recall the reader that $L$ is an $\bfield^\sigma$-vector space of dimension $n$, the order of the automorphism, see e.g. \cite[\S 4.5]{Jacobson:1985}.

\begin{lemma}[Circulant Lemma]\cite[Corollary 4.13]{Lam/Leroy:1988}\label{circulantlemma}
Let \(\{\alpha_0, \dots, \alpha_{n-1}\}\) be an $\bfield^\sigma$--basis of \(\bfield\). Then, for all \(t \leq n\) and every subset \(\{k_1, k_2, \dots, k_{t}\} \subseteq \{0, 1, \dots, n-1\}\),
\[
\begin{vmatrix}
\alpha_{k_1} & \sigma(\alpha_{k_1}) & \dots & \sigma^{t-1}(\alpha_{k_1}) \\
\alpha_{k_2} & \sigma(\alpha_{k_2}) & \dots & \sigma^{t-1}(\alpha_{k_2}) \\
\vdots & \vdots & \ddots & \vdots \\
\alpha_{k_{t}} & \sigma(\alpha_{k_{t}}) & \dots & \sigma^{t-1}(\alpha_{k_{t}})
\end{vmatrix} \neq 0.
\]
\end{lemma}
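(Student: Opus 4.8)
The plan is to prove the \emph{a priori} stronger statement that makes an induction possible: for every $t \le n$ and every $\subfield$-linearly independent family $\beta_1, \dots, \beta_t \in \bfield$, the determinant $\det\bigl(\sigma^{j-1}(\beta_i)\bigr)_{1 \le i,j \le t}$ is nonzero. The Lemma is then the special case $\beta_i = \alpha_{k_i}$, since distinct members of an $\subfield$-basis of $\bfield$ are $\subfield$-linearly independent.

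I would first dispatch the extreme case $t = n$. For an $\subfield$-basis $\beta_1, \dots, \beta_n$ of $\bfield$, set $A = \bigl(\sigma^{j-1}(\beta_i)\bigr)_{1 \le i,j \le n}$; a direct computation shows $A\transpose{A} = \bigl(\trace(\beta_i\beta_j)\bigr)_{i,j}$, because the $(i,j)$ entry of the left-hand side is $\sum_{l=0}^{n-1}\sigma^l(\beta_i)\sigma^l(\beta_j) = \sum_{l=0}^{n-1}\sigma^l(\beta_i\beta_j)$. Since $\sigma$ has order exactly $n$, the extension $\bfield/\subfield$ is Galois with group $\langle\sigma\rangle$, hence separable, so the trace form $(x,y)\mapsto\trace(xy)$ is nondegenerate and its Gram matrix $\bigl(\trace(\beta_i\beta_j)\bigr)$ is invertible. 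Thus $\det(A)^2 = \det(A\transpose{A}) \neq 0$, so $\det A \neq 0$.

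For $t < n$ I would use (complete) induction on $t$, the case $t = 1$ being trivial. Let $\beta_1, \dots, \beta_{t+1}$ be $\subfield$-independent with $t+1 \le n$ and suppose, for contradiction, that $\det\bigl(\sigma^{j-1}(\beta_i)\bigr)_{1\le i,j\le t+1} = 0$. Then the columns are $\bfield$-linearly dependent, so $\sum_{j=0}^{t}b_j\sigma^j(\beta_i) = 0$ for $i = 1, \dots, t+1$ with some $b_0, \dots, b_t \in \bfield$ not all zero. Were $r := \max\{j : b_j \neq 0\}$ strictly less than $t$, the first $r+1$ of these equations would produce a vanishing $(r+1)\times(r+1)$ determinant of the same shape built from $\beta_1, \dots, \beta_{r+1}$, against the induction hypothesis; hence $b_t \neq 0$ and we normalize $b_t = 1$. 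From $\sigma^t(\beta_i) = -\sum_{j=0}^{t-1}b_j\sigma^j(\beta_i)$, applying powers of $\sigma$ and using its multiplicativity, one proves by induction on $k$ that $\sigma^k(\beta_i) = \sum_{l=0}^{t-1}a_{k,l}\sigma^l(\beta_i)$ for all $k \ge 0$ and all $i \le t+1$, with coefficients $a_{k,l}\in\bfield$ depending only on $k$ and the $b_j$, \emph{not} on $i$. Hence the vectors $w_i = \bigl(\beta_i,\sigma(\beta_i),\dots,\sigma^{n-1}(\beta_i)\bigr) \in \bfield^n$ ($i = 1, \dots, t+1$) all lie in the $\bfield$-span of the $t$ fixed vectors $(a_{0,l},\dots,a_{n-1,l})$, $l = 0,\dots,t-1$, a subspace of dimension $\le t$, so $w_1,\dots,w_{t+1}$ are $\bfield$-linearly dependent. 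But extending $\beta_1,\dots,\beta_{t+1}$ to an $\subfield$-basis of $\bfield$, the case $t = n$ shows these $w_i$ are rows of an invertible matrix and so independent --- a contradiction.

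The step I expect to require the most care is the verification that the reduction coefficients $a_{k,l}$ are genuinely independent of $i$, which is the crux of the last paragraph; everything else is linear algebra over $\bfield$ together with the separability of $\bfield/\subfield$, and one uses throughout that $\sigma$ has order exactly $n$. Alternatively one may phrase the argument with skew polynomials: the relevant determinant equals, up to sign, the image of the last basis element under the $\subfield$-linear operator $\sum_j(-1)^j\Delta_j\,\sigma^j$ obtained by Laplace expansion along a row, whose kernel one controls using that $\bfield[x;\sigma]/\langle x^n-1\rangle \cong \operatorname{End}_{\subfield}(\bfield)$ (Artin's independence of characters and a dimension count); this is in the spirit of \cite{Lam/Leroy:1988} and of the elementary treatment in \cite{GLNSugi}.
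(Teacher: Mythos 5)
Your proposal is correct, and it is genuinely self-contained, which is more than the paper itself provides: the paper does not prove Lemma \ref{circulantlemma}, but quotes it as a special case of \cite[Corollary 4.13]{Lam/Leroy:1988}, where it sits inside the theory of Vandermonde/Wronskian matrices and $\sigma$-conjugacy over division rings, and refers to \cite{GLNSugi} for an elementary argument. I checked the two delicate points of your proof and both hold. For the full-size case, the identity $A\transpose{A}=\bigl(\trace(\beta_i\beta_j)\bigr)_{i,j}$ is valid because $\sigma$ is multiplicative and $\trace=\sum_{l=0}^{n-1}\sigma^l$ (by Artin, $\bfield/\subfield$ is Galois of degree $n$ with group $\langle\sigma\rangle$), and nondegeneracy of the trace form of this separable extension gives $\det(A)^2\neq 0$. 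For the induction step, forcing $b_t\neq 0$ via the maximal nonzero index legitimately appeals to the complete induction hypothesis applied to the subfamily $\beta_1,\dots,\beta_{r+1}$, and the crux you flagged is sound: writing $\sigma^{k+1}(\beta_i)=\sum_{l}\sigma(a_{k,l})\sigma^{l+1}(\beta_i)$ and eliminating $\sigma^{t}(\beta_i)$ with the fixed relation yields the recursion $a_{k+1,l}=\sigma(a_{k,l-1})-\sigma(a_{k,t-1})b_l$ (with $a_{k,-1}=0$), which involves only the $b_j$ and hence is independent of $i$; consequently $w_i=\sum_{l=0}^{t-1}\sigma^{l}(\beta_i)\,(a_{0,l},\dots,a_{n-1,l})$ traps all $t+1$ vectors in a $t$-dimensional subspace, contradicting the invertibility obtained by extending to an $\subfield$-basis. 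Compared with the route through \cite{Lam/Leroy:1988}, your argument buys independence from external machinery, using only the nondegenerate trace form plus a linear recurrence and a dimension count, so it could serve as a short appendix making the paper self-contained; what it does not give, and the cited theory does, is the greater generality (division rings, arbitrary $\sigma$-conjugacy classes) of which this lemma is a special case.
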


Let us introduce the following notation, which will simplify some constructions in the sequel. For each \(\gamma \in \bfield\), we denote
\[
\gamma^{[\sigma]} = 
\begin{pmatrix}
\gamma \\
\sigma(\gamma) \\
\vdots \\
\sigma^{n-1}(\gamma)
\end{pmatrix} \in \bfield^n.
\]
We also denote by \(\varepsilon_i = (0, \dots, 1, \dots, 0)\), with $0\leq i\leq n-1$, the vector with \(1\) in its \(i\)th position and \(0\), otherwise. These canonical vectors will be considered as rows or columns as required by the situation. 

For the rest of this section, let us fix an $\bfield^\sigma$--basis \(\{\alpha_0, \dots, \alpha_{n-1}\}\) of \(\bfield\). A consequence of Lemma \ref{circulantlemma} is that any different vectors \(\alpha_{k_1}^{[\sigma]}, \dots, \alpha_{k_t}^{[\sigma]}\) are linearly independent, where $\{k_1,\ldots ,k_t\}\subseteq\{0,\ldots ,n-1\}$. 
 
\begin{lemma}\label{lineardependence}
Let \(l,l_1, \dots, l_s \in \{0, \dots, n-1\}\). Then \(\alpha_l^{[\sigma]}\) is a linear combination of \(\alpha_{l_1}^{[\sigma]}, \dots, \alpha_{l_s}^{[\sigma]}\) if and only if 
$l=l_i$ for some $i\in \{1,\ldots ,s\}$.
\end{lemma}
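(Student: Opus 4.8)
The plan is to dispatch the two implications separately, with essentially all the content living in the ``only if'' direction. The ``if'' direction is immediate: if \(l = l_i\) for some \(i\), then \(\alpha_l^{[\sigma]} = \alpha_{l_i}^{[\sigma]}\) is trivially a linear combination of \(\alpha_{l_1}^{[\sigma]}, \dots, \alpha_{l_s}^{[\sigma]}\), namely the one with coefficient \(1\) on the \(i\)-th term and \(0\) elsewhere. So I would state this in one line.

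For the ``only if'' direction, suppose \(\alpha_l^{[\sigma]}\) is a linear combination of \(\alpha_{l_1}^{[\sigma]}, \dots, \alpha_{l_s}^{[\sigma]}\), and argue by contradiction assuming \(l \neq l_i\) for every \(i \in \{1, \dots, s\}\). The first step is a harmless normalization: discard repeated indices among \(l_1, \dots, l_s\), keeping a subset \(\{l_{j_1}, \dots, l_{j_r}\}\) of pairwise distinct elements of \(\{0, \dots, n-1\}\) for which \(\alpha_l^{[\sigma]}\) is still a linear combination of \(\alpha_{l_{j_1}}^{[\sigma]}, \dots, \alpha_{l_{j_r}}^{[\sigma]}\). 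Since these indices are distinct and all different from \(l\), the set \(\{l, l_{j_1}, \dots, l_{j_r}\}\) consists of \(r+1 \leq n\) pairwise distinct elements of \(\{0, \dots, n-1\}\).

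Now I would invoke the consequence of Lemma \ref{circulantlemma} already recorded above, namely that vectors \(\alpha_{k_1}^{[\sigma]}, \dots, \alpha_{k_t}^{[\sigma]}\) attached to pairwise distinct indices are linearly independent (this is because the top \(t\) rows of the matrix whose columns are these vectors form, up to transposition, exactly the matrix of Lemma \ref{circulantlemma}, whose determinant is nonzero). Applied to the index set \(\{l, l_{j_1}, \dots, l_{j_r}\}\), it tells us that \(\alpha_l^{[\sigma]}, \alpha_{l_{j_1}}^{[\sigma]}, \dots, \alpha_{l_{j_r}}^{[\sigma]}\) are linearly independent, which contradicts the fact that \(\alpha_l^{[\sigma]}\) lies in the span of the remaining ones. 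Hence \(l = l_{j_k}\) for some \(k\), and in particular \(l = l_i\) for some \(i \in \{1, \dots, s\}\), as claimed. I do not expect any genuine obstacle here; the only point requiring a moment of care is the reduction to distinct indices (so that the Circulant Lemma, which is stated for subsets, applies) together with the bookkeeping that the resulting number of vectors is at most \(n\).
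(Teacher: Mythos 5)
Your proof is correct and follows exactly the route the paper intends: the paper's own proof is just the one-line remark that the claim follows from the Circulant Lemma via the observation (recorded right before the statement) that vectors $\alpha_{k_1}^{[\sigma]},\dots,\alpha_{k_t}^{[\sigma]}$ with distinct indices are linearly independent. Your write-up merely spells out the removal of repeated indices and the contradiction, which is exactly the intended argument.
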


\begin{proof} It follows trivially from Lemma \ref{circulantlemma}.
\end{proof}
We now fix a set of indices \(\{t_1, \dots, t_m\} \subseteq \{0, \dots, n-1\}\) and let
\[
A = \left(\begin{array}{c|c|c}
\alpha_{t_1}^{[\sigma]} & \dots & \alpha_{t_m}^{[\sigma]}
\end{array}\right) = 
\begin{pmatrix}
\alpha_{t_1} & \cdots & \alpha_{t_{m}} \\
\sigma(\alpha_{t_1}) & \cdots & \sigma(\alpha_{t_{m}}) \\
\vdots & \ddots & \vdots \\
\sigma^{n-1}(\alpha_{t_1}) & \cdots & \sigma^{n-1}(\alpha_{t_{m}})
\end{pmatrix}.
\]
By Lemma \ref{circulantlemma}, \(\rank A = m\), where \(\rank A\) denotes the rank of \(A\). Let \(B \in \matrixring{m \times p}{\bfield}\), a \(m \times p\) matrix over \(\bfield\), such that \(\rank B = p\) and it has no zero row. Observe that then \(p \leq m\) and \(\rank AB = p\). Let us split the set $\{1,\ldots, m\}$ into two disjoint subsets $G_1$ and $G_2$, where
\[
\rank \left( \begin{array}{c|c}
AB & \alpha_{t_i}^{[\sigma]}
\end{array} \right) = \begin{cases}
p & \text{if \(i \in G_1\),} \\
p+1 & \text{if \(i \in G_2\).}
\end{cases}
\]
Denote by $q$ the cardinal of the set $G_1$.
\begin{lemma}\label{q=p=m}
Under the above conditions and notation, \(q \leq p\). Moreover, \(q=p\) if and only if \(p=m\).
\end{lemma}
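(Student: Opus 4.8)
The plan is to analyze the column space $V = \operatorname{im}(AB) \subseteq \bfield^n$, which has dimension $p$ since $\rank AB = p$. By definition, $i \in G_1$ exactly when $\alpha_{t_i}^{[\sigma]} \in V$. So $q$ counts how many of the $m$ vectors $\alpha_{t_1}^{[\sigma]}, \dots, \alpha_{t_m}^{[\sigma]}$ lie in $V$. Since these $m$ vectors are linearly independent (Lemma~\ref{circulantlemma}, as recalled after Lemma~\ref{lineardependence}), and $\dim V = p$, a set of $q$ linearly independent vectors sitting inside $V$ forces $q \leq p$. This disposes of the first assertion.

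For the equivalence, the easy direction is $p = m \Rightarrow q = p$: if $p = m$ then $\rank AB = m = \rank A$, and since the columns of $AB$ are $\bfield$-linear combinations of the columns of $A$, we get $\operatorname{im}(AB) = \operatorname{im}(A)$; hence every $\alpha_{t_i}^{[\sigma]}$ lies in $V$, so $G_1 = \{1, \dots, m\}$ and $q = m = p$. The substantive direction is $q = p \Rightarrow p = m$. Suppose $q = p$; I want to show no index can lie in $G_2$, i.e. $G_2 = \emptyset$. Pick the $p$ indices in $G_1$, say $\{t_{i_1}, \dots, t_{i_p}\}$; the corresponding vectors $\alpha_{t_{i_1}}^{[\sigma]}, \dots, \alpha_{t_{i_p}}^{[\sigma]}$ are $p$ linearly independent vectors inside the $p$-dimensional space $V$, hence they form a basis of $V$. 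Therefore $V = \operatorname{span}_{\bfield}\{\alpha_{t_{i_1}}^{[\sigma]}, \dots, \alpha_{t_{i_p}}^{[\sigma]}\}$. Now take any $j \in \{1, \dots, m\}$; then $\alpha_{t_j}^{[\sigma]}$ — whether or not $j \in G_1$ — must be compared against $V$. If some $j$ were in $G_2$, then $\alpha_{t_j}^{[\sigma]} \notin V = \operatorname{span}\{\alpha_{t_{i_1}}^{[\sigma]}, \dots, \alpha_{t_{i_p}}^{[\sigma]}\}$, which is fine so far; the contradiction must come from elsewhere.

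The key realization is to compare $V$ with $\operatorname{im}(A)$ and count dimensions via the $\alpha$-structure. We have $V \subseteq \operatorname{im}(A)$ and $\dim \operatorname{im}(A) = m$. Consider the $m$ vectors $\alpha_{t_1}^{[\sigma]}, \dots, \alpha_{t_m}^{[\sigma]}$: they are a basis of $\operatorname{im}(A)$. For $i \in G_1$ we have $\alpha_{t_i}^{[\sigma]} \in V$; I claim that for $i \in G_2$, adding $\alpha_{t_i}^{[\sigma]}$ to $V$ strictly enlarges it, and moreover that the vectors $\{\alpha_{t_i}^{[\sigma]} : i \in G_2\}$ together with a basis of $V$ remain linearly independent — indeed, by Lemma~\ref{lineardependence}, no $\alpha_{t_i}^{[\sigma]}$ with $i \in G_2$ is a linear combination of the others $\alpha_{t_j}^{[\sigma]}$, and since $V$ is spanned by the $G_1$-indexed $\alpha^{[\sigma]}$ vectors (established above when $q=p$), the set $\{\alpha_{t_j}^{[\sigma]} : j \in G_1\} \cup \{\alpha_{t_i}^{[\sigma]} : i \in G_2\}$ is exactly $\{\alpha_{t_1}^{[\sigma]}, \dots, \alpha_{t_m}^{[\sigma]}\}$, which is linearly independent of total size $m$. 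Taking dimensions, $m = \dim \operatorname{im}(A) \geq \dim\left(V + \operatorname{span}\{\alpha_{t_i}^{[\sigma]} : i\in G_2\}\right) = p + |G_2| = p + (m - q) = p + (m-p) = m$, so equality holds throughout, which is consistent but not yet a contradiction. The genuine contradiction: if $G_2 \neq \emptyset$ then $V \subsetneq \operatorname{im}(A)$ strictly, so $p < m$; conversely we are assuming $q = p$ and wish to conclude $p = m$. So I instead argue the contrapositive cleanly: if $p < m$, does $q = p$ fail? Not necessarily — so the right formulation is: $q = p$ iff $V$ contains all $m$ of the $\alpha_{t_i}^{[\sigma]}$, iff $\operatorname{im}(A) = \operatorname{span}\{\alpha_{t_i}^{[\sigma]}\} \subseteq V \subseteq \operatorname{im}(A)$, forcing $V = \operatorname{im}(A)$ and $p = m$. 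The main obstacle is precisely pinning down this last chain: showing $q = p$ forces $V$ to contain \emph{every} $\alpha_{t_i}^{[\sigma]}$ (not just $p$ of them), which follows because a $p$-dimensional space containing $p$ of the independent vectors $\alpha_{t_{i_1}}^{[\sigma]},\dots$ is spanned by them, and if it missed some $\alpha_{t_j}^{[\sigma]}$ then $V$ plus that vector would be a $(p+1)$-dimensional subspace of the $m$-dimensional $\operatorname{im}(A)$ — permissible only if $p < m$; we then check that this last case indeed cannot coexist with $q=p$ because the remaining indices would have to all lie in $G_1$, contradiction on counting. I will write this out carefully, the cleanest route being: $q = p \iff \dim V = $ number of $\alpha_{t_i}^{[\sigma]}$ it contains $= m \iff p = m$.
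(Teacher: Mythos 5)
Your bound $q \leq p$ and the direction $p = m \Rightarrow q = p$ are fine and essentially agree with the paper. The genuine gap is in the converse $q = p \Rightarrow p = m$: nowhere in your argument do you use the hypothesis that $B$ has no zero row, and without that hypothesis the implication is simply false. Take $m = 2$ and $B = \left(\begin{smallmatrix}1\\ 0\end{smallmatrix}\right)$ (rank $p = 1$, but with a zero row): then $AB$ has the single column $\alpha_{t_1}^{[\sigma]}$, so $V = \operatorname{span}\{\alpha_{t_1}^{[\sigma]}\}$, $G_1 = \{1\}$, and $q = 1 = p$ while $p < m$. This configuration satisfies every property you invoke (a $p$-dimensional $V \subseteq \operatorname{im}(A)$ spanned by exactly $p$ of the independent vectors $\alpha_{t_i}^{[\sigma]}$), which is why your attempted ``contradiction on counting'' never materializes: as you yourself note, the dimension count closes up consistently, and your final chain ``$q = p$ iff $V$ contains all $m$ of the $\alpha_{t_i}^{[\sigma]}$'' is a restatement of the claim, not a proof of it.

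What is missing is to bring the rows of $B$ into play. The consequence of $q = p$ that you did establish, namely $V = \operatorname{span}\{\alpha_{t_i}^{[\sigma]} : i \in G_1\}$, is the right starting point: each column of $AB$ equals $\sum_{i=1}^m \alpha_{t_i}^{[\sigma]} b_{ij}$ and lies in $V$, so the linear independence of all $m$ vectors $\alpha_{t_1}^{[\sigma]}, \dots, \alpha_{t_m}^{[\sigma]}$ (Lemma \ref{circulantlemma}) forces $b_{ij} = 0$ for every $i \in G_2$ and every $j$; that is, every row of $B$ indexed by $G_2$ is zero. Since $B$ has no zero row, $G_2 = \emptyset$, hence $q = m$ and, with $q = p$, $p = m$. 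This is in substance the paper's proof: there one reorders the columns so that $G_1 = \{1, \dots, q\}$, splits $A$ and $B$ into blocks, deduces that the columns of $A_1 B_1$ lie in the span of the columns of $A_0$, and then uses a column of $B_1$ whose last entry is nonzero (this is exactly where the no-zero-row hypothesis enters) to produce a nontrivial linear relation among $\alpha_{t_1}^{[\sigma]}, \dots, \alpha_{t_m}^{[\sigma]}$, contradicting Lemma \ref{circulantlemma}. Your write-up needs this step, or an equivalent use of the no-zero-row hypothesis, to be a proof.
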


\begin{proof}
Since \(\rank AB = p\), for each $i\in G_1$, \(\alpha_{t_i}^{[\sigma]}\) is a linear combination of the columns of \(AB\). Therefore, there exists \(C \in \matrixring{p \times q}{\bfield}\) such that the matrix formed by the column vectors $\alpha_{t_i}^{[\sigma]}$, for $i\in G_1$, equals $ABC$.
Then \(\rank ABC = q\), by Lemma \ref{circulantlemma}. Consequently, \(q \leq p\). 

Assume now \(p=m\). Then \(B\) is non singular and 
\(
A = ABB^{-1}.
\)
In particular, for each \(1 \leq i \leq m\), \(\alpha_{t_i}^{[\sigma]}\) is a linear combination of the columns of \(AB\), i.e.
\[
\rank \left(\begin{array}{c|c}
AB & \alpha_{t_i}^{[\sigma]}
\end{array}\right) = m.
\]
In other words, \(G_1=\{1,\ldots, m\}\), so \(q=m=p\). 

Finally, suppose \(q=p\) and assume \(q < m\). We may reorder the columns of $A$ in such a way that the first $q$ columns correspond to the indices in $G_1$, that is, $G_1=\{1,\ldots, q\}$ and $G_2=\{q+1,\ldots ,m\}$. Let $P$ be the permutation matrix which provides this reordering of the columns of $A$, hence $AB=APP^{-1}B$. Observe that $P^{-1}B$ also has rank $p$ and all its rows are non zero. So, for simplicity, we abuse notation and denote also by $A$ the matrix $AP$, and by $B$, the matrix $P^{-1}B$. We divide the matrices as
\[
A = \left(\begin{array}{c|c} A_0 & A_1 \end{array}\right) \text{ and }  B = \left(\begin{array}{c} B_0 \\ \hline B_1 \end{array}\right),
\]
where \(A_0\) encompasses the first \(q\) columns of $A$ and \(B_0\), the first \(q\) rows of $B$. 
Then \[
q = \rank \left(\begin{array}{c|c} AB & A_0 \end{array}\right) = \rank \left(\begin{array}{c|c} A_0 B_0 + A_1 B_1 & A_0 \end{array}\right) = \rank \left(\begin{array}{c|c} A_1 B_1 & A_0 \end{array}\right).
\]
By hypothesis, the last row of \(B_1\) is non zero. Let \((b_{q+1}, \dots, b_m)\) be a column of \(B_1\) with \(b_m \neq 0\). Then 
\[
A_1 \left(\begin{smallmatrix} b_{q+1} \\ \vdots \\ b_m \end{smallmatrix}\right) = \alpha_{t_{q+1}}^{[\sigma]} b_{q+1} + \dots + \alpha_{t_{m}}^{[\sigma]} b_{m}
\] 
is a linear combination of the columns of \(A_0\), i.e. there exist \(c_1, \dots, c_q \in \bfield\) such that 
\[
\sum_{i=1}^q \alpha_{t_i}^{[\sigma]} c_i = \sum_{i=q+1}^m \alpha_{t_i}^{[\sigma]} b_i
\]
and, consequently, \(\alpha_{t_1}^{[\sigma]}, \dots, \alpha_{t_m}^{[\sigma]}\) are linearly dependent, contradicting Lemma \ref{circulantlemma}. Thus \(p=m\).
\end{proof}

We shall also need the following straightforward technical lemma.

\begin{lemma}\label{tech}
Let \(H \in \matrixring{m \times n}{\bfield}\) be a matrix in reduced row echelon form and let \(\varepsilon_i\) be a canonical vector of length \(n\). Then
\[
\rank \left( \begin{array}{c} H \\ \hline \varepsilon_i \end{array} \right) = \rank H
\]
if and only if \(\varepsilon_i\) is a row of \(H\).
\end{lemma}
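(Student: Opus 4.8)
The plan is to prove the two implications of Lemma \ref{tech} separately, using the structure of reduced row echelon form in an essential way.

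First I would handle the easy direction: if $\varepsilon_i$ is a row of $H$, then appending $\varepsilon_i$ below $H$ adds a row that is already in the row space of $H$, so the rank is unchanged. This requires no special property of $H$.

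For the converse, I would argue by contraposition: assuming $\varepsilon_i$ is \emph{not} a row of $H$, I want to show the rank strictly increases, i.e. $\varepsilon_i$ is not in the row space of $H$. Suppose, for contradiction, that $\varepsilon_i = \sum_j c_j h_j$ where $h_1, \dots, h_m$ are the rows of $H$. Since $H$ is in reduced row echelon form, each nonzero row $h_j$ has a distinguished pivot column $p_j$, the pivot columns are pairwise distinct, and column $p_j$ has a $1$ in row $j$ and $0$ in every other row. Looking at the $p_j$-th coordinate of the equation $\varepsilon_i = \sum_k c_k h_k$ gives $c_j = (\varepsilon_i)_{p_j}$, which is $0$ unless $p_j = i$. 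So at most one coefficient $c_j$ is nonzero, namely the one whose pivot is in column $i$ — if such a pivot exists. If no pivot column equals $i$, then all $c_j = 0$, forcing $\varepsilon_i = 0$, a contradiction. If some pivot column $p_{j_0} = i$, then $\varepsilon_i = c_{j_0} h_{j_0}$ with $c_{j_0} = 1$, so $\varepsilon_i = h_{j_0}$; but a reduced-row-echelon row whose pivot is in column $i$ has its leading $1$ in position $i$ and zeros before it, and equality with $\varepsilon_i$ forces all later entries to vanish too, so $h_{j_0} = \varepsilon_i$, contradicting the assumption that $\varepsilon_i$ is not a row of $H$. Either way we reach a contradiction, so $\varepsilon_i$ lies outside the row space and the rank increases.

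This lemma is genuinely straightforward, so I do not anticipate a real obstacle; the only point requiring a little care is making precise why reading off the pivot coordinates pins down the coefficients, and then why the resulting single-term combination must actually equal a full row of $H$ rather than merely agreeing in the pivot position. Note that the hypothesis that $H$ is in \emph{reduced} (not merely) row echelon form is what guarantees zeros above the pivots, and is used implicitly when concluding $c_j = (\varepsilon_i)_{p_j}$ for every $j$, not just below-pivot rows.
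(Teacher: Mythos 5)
Your proof is correct. The paper in fact offers no proof of Lemma \ref{tech} at all --- it is introduced as a ``straightforward technical lemma'' and left to the reader --- and your argument is precisely the standard one that fills this gap: the trivial direction when \(\varepsilon_i\) is a row, and for the converse, reading off each coefficient \(c_j\) from the pivot column \(p_j\) (which, because the form is \emph{reduced}, contains a \(1\) in row \(j\) and zeros elsewhere), so that either all coefficients vanish (impossible, since \(\varepsilon_i \neq 0\)) or \(\varepsilon_i\) coincides with the single row whose pivot sits in column \(i\). Your closing remark correctly identifies where reducedness is used; a matrix merely in row echelon form, such as \(\left(\begin{smallmatrix} 1 & 1 \\ 0 & 1 \end{smallmatrix}\right)\), shows the statement would fail without it.
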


\section{Skew cyclic codes}\label{RSsec}

In this section we introduce skew cyclic codes over $L$ and give some of their properties. Let us denote by $R$ the skew polynomial ring $L[x;\sigma]$, see \cite{Ore:1933}. Given $f,g\in R$, we say that $f$ right divides $g$, $f\mid_r g$, if $Rg\subseteq Rf$, i.e. the remainder of the left division of $g$ by $f$ is zero. For $\gamma \in L$, the right evaluation of $f$ in $\gamma$ is the remainder of the left division of $f$ by $x-\gamma$. 
As \(R\) is a left (and right) PID, there are least common multiples and greatest common divisors of polynomials on both sides. We use the notation
\[
\lclm{f,g} \text{ and } \gcrd{f,g}
\]
to refer to the least common left multiple and the greatest common right divisor of a pair \(f,g \in R\). Then,
\[
Rf + Rg = R\gcrd{f,g} \text{ and } Rf \cap Rg = R\lclm{f,g}.
\]

Since we are assuming that the order of \(\sigma\) is \(n\), the polynomial \(x^n-1\) is central in $R$, so we may consider the quotient ring $\mathcal{R}= \bfield[x;\sigma] / \langle x^n-1\rangle$. Throughout, we shall see the elements in $\mathcal{R}$ as polynomials of degree lower than $n$. As an $L$-vector space, $\mathcal{R}$ is isomorphic to $L^n$ via the coordinate map \(\tovector:\mathcal{R} \to \bfield^n\), mapping each polynomial (of degree lower than \(n\)) to the vector formed by its coefficients. We shall use freely this identification all along the paper. 

Linear codes over $L$ are vector subspaces of \(\bfield^n\). When \(\bfield = \field[q]\), a finite field, these are block codes, whilst, when \(\bfield = \field[q](z)\), these are convolutional codes \cite{Forney:1970}. So we adopt the coding theory terminology over $\bfield$. Following this philosophy, we may define the \emph{weight} of a vector and the \emph{Hamming distance} of a code as usual. All these notions are going to be used without further mention in the framework of a vector space over \(\bfield\).

\begin{definition}
A \emph{skew cyclic code} over \(\bfield\) is a vector subspace \(\mathcal{C} \leq \bfield^n\) such that \(\tovector^{-1}(\mathcal{C})\) is a left ideal of \(\mathcal{R}\). Equivalently, it is a vector subspace \(\mathcal{C} \leq \bfield^n\) such that
\[
(a_0, \dots, a_{n-2}, a_{n-1}) \in \mathcal{C} \Rightarrow (\sigma(a_{n-1}), \sigma(a_0), \dots, \sigma(a_{n-2})) \in \mathcal{C}.
\]
\end{definition}

These codes have been introduced under the name of \(\sigma\)--cyclic codes in \cite{Boucher/Geiselmann/Ulmer:2007}, when $\bfield$ is a finite field, and skew cyclic convolutional codes in \cite{GLNSCCC}, when \(\bfield = \field[q](z)\). 

Any left ideal of \(\mathcal{R}\) is principal, so every skew cyclic code is generated by a polynomial in \(R\). Obviously, the generator can be taken as a right divisor of \(x^n - 1\), so it is important to determine the decompositions of $x^n-1$ in $R$. First we recall some notions about skew polynomials. Let \(\gamma \in \bfield\), the \(i\)th-norm of \(\gamma\) is defined to be
\[
\norm{i}{\gamma} = \gamma \sigma(\gamma) \dots \sigma^{i-1}(\gamma).
\]
Norms are useful to evaluate skew polynomials. In fact, if \(f = \sum_{i \geq 0} f_i x^i \in R\), the remainder of the left division of \(f\) by \(x-\gamma\) is 
\begin{equation}\label{polyevaluation}
\textstyle\sum_{i \geq 0}f_i \norm{i}{\gamma}.
\end{equation}
This is an easy computation that can be found in \cite{Leroy:1995} in a more general context. We also recall some formulas in order to ease the reading of the paper. Concretely, if \(\alpha,\beta,\gamma \in \bfield\) such that \(\beta = \alpha^{-1}\sigma(\alpha)\), then
\begin{equation}\label{normproperties}
\begin{split}
\norm{i}{\sigma^k(\gamma)} &= \sigma^k(\norm{i}{\gamma}), \\
\norm{i}{\sigma^k(\beta)} &= \sigma^k(\alpha)^{-1} \sigma^{k+i}(\alpha).
\end{split}
\end{equation}

We shall follow the systematic method described in \cite{GLNSCCC} in order to construct skew cyclic codes. By the Normal Basis Theorem, we may choose an element $\alpha\in \bfield$ such that $\{\alpha, \sigma(\alpha),\ldots ,\sigma^{n-1}(\alpha)\}$ is a basis of $\bfield$ as an $\bfield^\sigma$-vector space. Set now \(\beta = \alpha^{-1}\sigma(\alpha)\). We shall fix this notation for the rest of the paper. 

\begin{lemma}\label{degreelclm}
For any subset \(T=\{t_1,t_2,\cdots, t_m\}\subseteq \{0,1,\ldots ,n-1\}\), the polynomial 
\[
g=\lclm{x-\sigma^{t_1}(\beta),x-\sigma^{t_2}(\beta),\ldots ,x-\sigma^{t_m}(\beta)}
\] 
has degree $m$. Consequently, if \(x-\sigma^{s}(\beta) \mid_{r} g\), then \(s\in T\).
\end{lemma}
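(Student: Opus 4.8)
The plan is to show the two claims in sequence: first that $\deg g = m$, and then derive the "consequently" part. For the degree, I would use the standard fact about skew polynomial rings that the least common left multiple of a family of first-degree polynomials $x - \gamma_j$ has degree equal to the number of distinct *conjugacy classes* represented... but here all the $\sigma^{t_j}(\beta)$ lie in a single conjugacy class (that of $\beta$, since $\sigma^{t_j}(\beta) = \sigma^{t_j}(\alpha)^{-1}\sigma^{t_j+1}(\alpha)$ has the $(\alpha^{-1}\sigma(\alpha))$-form under $\alpha \mapsto \sigma^{t_j}(\alpha)$), so the naive degree count would only give $\deg g \le n$ and not the sharp bound $m$. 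Instead the key is the evaluation formula \eqref{polyevaluation} together with the norm identities \eqref{normproperties}: a polynomial $f = \sum_i f_i x^i$ of degree $< n$ is right-divisible by $x - \sigma^{t}(\beta)$ exactly when $\sum_i f_i \norm{i}{\sigma^t(\beta)} = 0$, and by the second formula in \eqref{normproperties} this reads $\sum_i f_i \sigma^t(\alpha)^{-1}\sigma^{t+i}(\alpha) = 0$, i.e. $\sum_i f_i \sigma^{t+i}(\alpha) = 0$, equivalently $\transpose{(f_0,\dots,f_{\deg f})}$ is orthogonal to the row $(\sigma^t(\alpha), \sigma^{t+1}(\alpha), \dots)$ — a shifted Vandermonde-type row built from the normal basis.

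So the concrete approach: let $g$ be the stated lclm and write $d = \deg g$; since $g$ divides $x^n - 1$ (each $x - \sigma^{t_j}(\beta)$ does, as $\norm{n}{\sigma^{t_j}(\beta)} = \sigma^{t_j}(\alpha)^{-1}\sigma^{t_j+n}(\alpha) = 1$) we have $d \le n$. We always have $d \le m$ is false in general; rather, the lclm of $m$ pairwise non-associate irreducibles has degree $\ge$... let me restructure. The right approach is: $g$ is characterized (up to left scalar) as the monic polynomial of least degree with $x - \sigma^{t_j}(\beta) \mid_r g$ for all $j$. By the evaluation criterion above, the coefficient vector $\mathbf{f} = (f_0, \dots, f_d)$ of such a $g$ must satisfy the $m$ linear equations $\sum_{i=0}^{d} f_i \sigma^{t_j+i}(\alpha) = 0$, $j = 1, \dots, m$. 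The coefficient matrix of this system (in the unknowns $f_0, \dots, f_d$) has rows indexed by $j$ and columns by $i$, with entry $\sigma^{t_j + i}(\alpha) = \sigma^i(\sigma^{t_j}(\alpha))$ — precisely a Circulant-Lemma matrix for the $m$ elements $\sigma^{t_1}(\alpha), \dots, \sigma^{t_m}(\alpha)$ of the normal basis. By Lemma \ref{circulantlemma} any $m$ columns are linearly independent, so this $m \times (d+1)$ matrix has rank $\min(m, d+1)$; a nonzero solution forces $d + 1 \ge m+1$ is wrong too — a nonzero solution forces the columns to be dependent, which by the Circulant Lemma needs more than $m$ columns, i.e. $d + 1 > m$, so $d \ge m$. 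Conversely, taking $d = m$ the system is $m$ equations in $m+1$ unknowns and has a nontrivial solution, yielding a monic $g$ of degree exactly $m$ right-divisible by all the $x - \sigma^{t_j}(\beta)$; minimality gives $\deg(\text{lclm}) \le m$. Hence $\deg g = m$.

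For the "consequently" clause: suppose $x - \sigma^s(\beta) \mid_r g$. Then the coefficient vector of $g$ satisfies $\sum_{i=0}^{m} f_i \sigma^{s+i}(\alpha) = 0$ in addition to the $m$ defining equations, i.e. it lies in the kernel of the $(m+1) \times (m+1)$ matrix with rows built from $\sigma^{t_1}(\alpha), \dots, \sigma^{t_m}(\alpha), \sigma^s(\alpha)$ (as a Circulant-Lemma matrix). If $s \notin T$, these $m+1$ elements are distinct elements of the normal basis, so by Lemma \ref{circulantlemma} the matrix is nonsingular, forcing $\mathbf{f} = 0$, contradicting that $g$ is monic of degree $m$. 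Therefore $s \in T$. The main obstacle I anticipate is bookkeeping the indices and the transpose/row-versus-column conventions so that the system's coefficient matrix is genuinely in the form to which the Circulant Lemma (Lemma \ref{circulantlemma}) applies — once that identification is clean, both parts are immediate; in particular one should be slightly careful that $s$ ranges in $\{0,\dots,n-1\}$ so that $\sigma^s(\alpha)$ really is a basis element distinct from the $\sigma^{t_j}(\alpha)$ when $s \notin T$.
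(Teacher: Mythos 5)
Your proof is correct and is essentially the argument the paper itself defers to (\cite[Lemma 2]{GLNSugi}): right divisibility by $x-\sigma^{t_j}(\beta)$ becomes, via \eqref{polyevaluation} and \eqref{normproperties}, the linear condition $\sum_i f_i\sigma^{t_j+i}(\alpha)=0$, and Lemma \ref{circulantlemma} controls the ranks of the resulting matrices, giving $d\ge m$, the existence of a common left multiple of degree $\le m$, and the ``consequently'' clause. The only imprecision is your claim that \emph{any} $m$ columns of the $m\times(d+1)$ coefficient matrix are independent --- Lemma \ref{circulantlemma} as stated only covers the consecutive powers $\sigma^0,\dots,\sigma^{t-1}$ applied to distinct normal-basis elements --- but your argument never needs more: when $d+1\le m$ the full set of columns $\sigma^0,\dots,\sigma^{d}$ together with any $d+1$ of the rows is exactly a Circulant-Lemma matrix, when $d+1\ge m$ the leading $m\times m$ block is, and the $(m+1)\times(m+1)$ matrix built from $\sigma^{t_1}(\alpha),\dots,\sigma^{t_m}(\alpha),\sigma^{s}(\alpha)$ (distinct indices in $\{0,\dots,n-1\}$, so automatically $m+1\le n$) is as well, so the rank computations and the final contradiction stand.
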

\begin{proof}
The statement can be proved following the same steps of the proof of \cite[Lemma 2]{GLNSugi}
\end{proof}

As a consequence of Lemma \ref{degreelclm}, 
\[
x^n - 1 = \lclm{x-\beta, x-\sigma(\beta), \dots, x-\sigma^{n-1}(\beta)},
\]
since, by \eqref{normproperties}, $\norm{n}{\sigma^k(\beta)} = \sigma^k(\alpha)^{-1}\sigma^{k+n}(\alpha) = \sigma^k(\alpha^{-1}\alpha) = 1, $
and then \(x-\sigma^k(\beta)\) right divides \(x^n-1\) for all \(0 \leq k \leq n-1\). Therefore, given $\{t_1,\ldots, t_k\}\subseteq \{0,1,\ldots ,n-1\}$, the polynomial $g=\lclm{x-\sigma^{t_1}(\beta),\ldots , x-\sigma^{t_k}(\beta)}$ generates a left ideal $\mathcal{R}g$ such that $\mathfrak{v}(\mathcal{R}g)$ is a skew cyclic code of dimension $n-k$.

For the convenience, we call $\beta$-\emph{roots} to the elements of the set  \(\{\beta, \sigma(\beta), \dots, \sigma^{n-1}(\beta)\}\). By \eqref{polyevaluation} and \eqref{normproperties}, given a polynomial \(f = \sum_{i=0}^{n-1} f_i x^i\in \mathcal{R}\),
\begin{equation}\label{findingroots}
\textstyle x - \sigma^j(\beta) \mid_r f \iff \sum_{i=0}^{n-1} f_i \norm{i}{\sigma^j(\beta)} = 0 \iff \sum_{i=0}^{n-1} f_i \sigma^{i+j}(\alpha) = 0.
\end{equation}
Let then $N$ be the matrix formed by the norms of the $\beta$-roots,
\begin{equation}\label{evaluationmatrix}
N = \begin{pmatrix}
\norm{0}{\beta} & \norm{0}{\sigma(\beta)} & \cdots & \norm{0}{\sigma^{n-1}(\beta)} \\
\norm{1}{\beta} & \norm{1}{\sigma(\beta)} & \cdots & \norm{1}{\sigma^{n-1}(\beta)} \\
\vdots & \vdots & \ddots & \vdots \\
\norm{n-1}{\beta} & \norm{n-1}{\sigma(\beta)} & \cdots & \norm{n-1}{\sigma^{n-1}(\beta)}
\end{pmatrix},
\end{equation}
the components of \(\mathfrak{v}(f) N=(f_0, \dots, f_{n-1})  N\) are the right evaluations of \(f\) in the set of $\beta$-roots, i.e. the vector formed by the left remainders of $f$ by the polynomials $x-\sigma^i(\beta)$ for $i=0,\ldots ,n-1$. Hence, the diagram   
\[
\xymatrix{ \mathcal{R} \ar@/^5pt/[rd]^-{ev} \ar[d]^-{\mathfrak{v}} & \\
\bfield^n \ar[r]^-{\cdot N} & \bfield^n,}
\]
is a commutative diagram of \(\bfield\)-linear isomorphisms, where $ev$ maps each polynomial $f$ to the $n$-tuple formed by the left remainders of $f$ by $x-\sigma^i(\beta)$ for $i=0,\ldots ,n-1$. Indeed, by Lemma \ref{circulantlemma}, $N$ is non singular, so it provides a change of basis. We call the \emph{set of $\beta$-roots of \(f\)} to the set formed by the $\beta$-roots $\gamma$ verifying that $x-\gamma \mid_r f$, that is by those corresponding to the zero coordinates of \((f_0, \dots, f_{n-1}) N\).

We say that, a non-constant right divisor \(f \mid_r x^n-1\), \emph{fully $\beta$-decomposes} if there exists \(\{t_1, \dots, t_m\} \subseteq \{0,1,\dots,n-1\}\) such that 
\[
f = \lclm{x-\sigma^{t_1}(\beta), \dots, x-\sigma^{t_m}(\beta)}.
\]
Observe that, by Lemma \ref{degreelclm}, \(\deg f = m\), the cardinal of the set of $\beta$-roots of $f$.

\begin{lemma}\label{rreffullydecomposable}
Let \(f = \sum_{i=0}^{m} f_i x^i \in \mathcal{R}\) with \(f_m \neq 0\) and
\[
M_f = \begin{pmatrix} 
f_0 & f_1 & \ldots & f_m & 0 & \ldots & 0 \\
0 & \sigma(f_0) & \ldots & \sigma(f_{m-1}) & \sigma(f_{m}) & \ldots & 0\\
&  & \ddots &  &  & \ddots & \\
0 & \ldots & 0 & \sigma^{n-m-1}(f_0)& \ldots & \ldots  &\sigma^{n-m-1}(f_m)
\end{pmatrix}_{(n-m)\times n}.
\]
Then the rows of \(M_f\) are a basis of \(\mathfrak{v}(\mathcal{R}f)\) as an \(\bfield\)--vector space. Moreover, \(f\) fully $\beta$-decomposes if and only if 
\[
\operatorname{rref}(M_f N) = \left( \begin{array}{c}
\varepsilon_{i_1} \\
\hline 
\vdots \\ 
\hline
\varepsilon_{i_{n-m}}
\end{array}\right)
\]
for some \(0 \leq i_1 < \dots < i_{n-m} \leq n-1\), where \(\operatorname{rref}\) denotes the reduced row echelon form.
\end{lemma}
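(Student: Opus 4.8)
The plan is to establish the two assertions separately. For the first, I would describe $\mathcal{R}f$ explicitly. Since $f$ has degree $m$ with $f_m\neq 0$, the left ideal $\mathcal{R}f$ is spanned as an $\bfield$-vector space by $f, xf, x^2f, \dots, x^{n-m-1}f$ (these are the reductions modulo $x^n-1$; multiplying $f$ by $x^j$ for $j\leq n-m-1$ stays within degree $<n$). A direct computation using $x\gamma = \sigma(\gamma)x$ shows that $x^jf$ has coordinate vector $(0,\dots,0,\sigma^j(f_0),\sigma^j(f_1),\dots,\sigma^j(f_m),0,\dots,0)$ with the block starting in position $j$; this is precisely the $(j+1)$-th row of $M_f$. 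Linear independence of these $n-m$ rows is clear because $M_f$ is in (nonreduced) echelon shape with pivots $f_0, \sigma(f_0),\dots,\sigma^{n-m-1}(f_0)$ in distinct columns, provided $f_0\neq 0$; but $f_0\neq 0$ because $x\mid_r f$ would force $f$ to be divisible by $x$, contradicting $f\mid_r x^n-1$ and $\gcd$-type reasoning (alternatively, the leading coefficient $f_m\neq 0$ alone suffices to see the rows are independent, reading pivots from the right end). Hence the rows of $M_f$ form a basis of $\tovector(\mathcal{R}f)$.

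For the second assertion, I would use the commutative diagram of $\bfield$-linear isomorphisms relating $\tovector$, $ev$, and $\cdot N$. Right-multiplication by $N$ sends the coordinate vector of a polynomial $h$ to the tuple of right evaluations of $h$ at the $\beta$-roots $\beta,\sigma(\beta),\dots,\sigma^{n-1}(\beta)$; by \eqref{findingroots}, the $j$-th coordinate of $\tovector(h)N$ vanishes iff $x-\sigma^j(\beta)\mid_r h$. Now $f$ fully $\beta$-decomposes iff $f = \lclm{x-\sigma^{t_1}(\beta),\dots,x-\sigma^{t_m}(\beta)}$ for some index set; by Lemma \ref{degreelclm} such an lclm has degree exactly $m$, matching $\deg f$, so $f$ fully $\beta$-decomposes iff $f$ has exactly $m$ distinct $\beta$-roots, say indexed by $\{t_1,\dots,t_m\}$, and then $\mathcal{R}f = \mathcal{R}g$ with $g$ that lclm (both are the right divisor of $x^n-1$ with that prescribed $\beta$-root set — here one invokes that $\mathcal{R}f=\bigcap \mathcal{R}(x-\sigma^{t_i}(\beta))$, i.e. $f$ is determined up to left units by its $\beta$-root set).

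The key translation is then: $\tovector(\mathcal{R}f)\cdot N = ev(\mathcal{R}f)$, and I would compute this image directly. A polynomial $h$ lies in $\mathcal{R}f$ iff every $\beta$-root of $f$ is a $\beta$-root of $h$, i.e. iff the support of $ev(h)$ is contained in the complement of $\{t_1,\dots,t_m\}$ — but this clean statement requires knowing $f$ fully decomposes. In general, $ev(\mathcal{R}f)$ is some $(n-m)$-dimensional subspace of $\bfield^n$. The point is that $ev(\mathcal{R}f)$ equals the coordinate subspace $\{v : v_{t_1}=\dots=v_{t_m}=0\}$ if and only if $f$ fully $\beta$-decomposes: the "if" direction follows because $\mathcal{R}f = \mathcal{R}g$ and $ev(\mathcal{R}g)$ is visibly that coordinate subspace (a polynomial is a left multiple of $g$ iff it vanishes at all $\beta$-roots of $g$); the "only if" direction follows because if $ev(\mathcal{R}f)$ is a coordinate subspace cut out by vanishing at positions $t_1,\dots,t_m$, then in particular $f$ itself, and hence the generator, vanishes at those $\beta$-roots, forcing $g=\lclm{x-\sigma^{t_i}(\beta)}\mid_r f$, and comparing degrees ($\deg g = m = \deg f$ by Lemma \ref{degreelclm}) gives $f=g$ up to a left unit. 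Finally, a subspace of $\bfield^n$ has reduced row echelon form consisting of distinct canonical vectors $\varepsilon_{i_1},\dots,\varepsilon_{i_{n-m}}$ precisely when it is a coordinate subspace, i.e. spanned by canonical vectors; applying this to $ev(\mathcal{R}f)$, whose a basis is given by the rows of $M_fN$, yields $\operatorname{rref}(M_fN)$ of the stated shape, with $\{i_1,\dots,i_{n-m}\}$ the complement of $\{t_1,\dots,t_m\}$.

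The main obstacle I anticipate is the "only if" half of the second assertion: one must argue carefully that if $\operatorname{rref}(M_fN)$ consists of canonical vectors, then $f$ is forced to be the lclm of the corresponding linear factors — not merely right-divisible by each $x-\sigma^{t_i}(\beta)$, but actually equal (up to a left scalar) to their lclm. This is where Lemma \ref{degreelclm} is essential, pinning down the degree so that right-divisibility by $g$ together with $\deg f = \deg g$ forces the decomposition. The echelon-form/coordinate-subspace equivalence itself is routine linear algebra (closely related to Lemma \ref{tech}), so I would state it briefly and spend the bulk of the write-up on the polynomial-to-subspace dictionary via the commutative diagram.
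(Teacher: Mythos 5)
Your proposal is correct and follows essentially the same route as the paper: identify the rows of $M_f$ with the basis $\{f, xf,\dots,x^{n-m-1}f\}$ of $\mathcal{R}f$, translate membership in $\mathcal{R}f$ into vanishing of the corresponding columns of $M_fN$ via the evaluation map \eqref{findingroots}, and use Lemma \ref{degreelclm} to force $f$ to equal the lclm of its linear factors in the converse direction. Your write-up is somewhat more explicit than the paper's (which compresses the argument into a rank/zero-column count), but there is no substantive difference; the aside about $f_0\neq 0$ is unnecessary since, as you note, the pivots $\sigma^j(f_m)$ read from the right already give independence of the rows.
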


\begin{proof}
An \(\bfield\)-basis of \(\mathcal{R}f\) is \(\{f, xf, \dots, x^{n-m-1}f\}\), whose coordinates correspond to the rows of \(M_f\).  Now, \(f = \lclm{x-\sigma^{t_1}(\beta), \dots, x-\sigma^{t_m}(\beta)}\) if and only if any left multiple of \(f\) is also a left multiple of \(x-\sigma^{t_i}(\beta)\) for \(1 \leq i \leq m\), if and only if the $t_i$-th columns of \(M_f N\) are zero for $i=1,\ldots ,m$. Since \(M_f N\) has \(n-m\) rows, rank $n-m$ and \(n-m\) non zero columns, the result follows.
\end{proof}

We shall need the following result. 

\begin{lemma}\label{fullydecomposable}
Let \(f,g \in \mathcal{R}\) be fully $\beta$-decomposable polynomials. Then \(\gcrd{f,g}\) and \(\lclm{f,g}\) are also fully $\beta$-decomposable. 
\end{lemma}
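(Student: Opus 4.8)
The plan is to characterize fully $\beta$-decomposable divisors of $x^n-1$ purely in terms of their sets of $\beta$-roots, and then observe that $\gcrd{\cdot,\cdot}$ and $\lclm{\cdot,\cdot}$ correspond to intersection and union of these sets. Concretely, I would first establish the following claim: a non-constant right divisor $h \mid_r x^n-1$ fully $\beta$-decomposes if and only if $\deg h$ equals the cardinality of the set of $\beta$-roots of $h$, and in that case $h = \lclm{x - \gamma : \gamma \text{ a } \beta\text{-root of } h}$. One implication is Lemma \ref{degreelclm} together with the definition of full $\beta$-decomposition; for the converse, if $h$ has $\deg h = m$ distinct $\beta$-roots $\sigma^{t_1}(\beta), \dots, \sigma^{t_m}(\beta)$, then each $x - \sigma^{t_i}(\beta)$ right divides $h$ (using \eqref{findingroots}), so $\lclm{x-\sigma^{t_1}(\beta),\dots,x-\sigma^{t_m}(\beta)} \mid_r h$; by Lemma \ref{degreelclm} this lclm has degree $m = \deg h$, and since both are monic (or we normalize), they coincide.

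Granting this, write $T_f$ and $T_g$ for the sets of indices $t$ such that $x - \sigma^t(\beta) \mid_r f$ (resp. $g$), so $f = \lclm{x-\sigma^t(\beta) : t \in T_f}$ with $\deg f = |T_f|$, and similarly for $g$. For the gcrd: since $\gcrd{f,g}$ right divides both $f$ and $g$, every $\beta$-root of $\gcrd{f,g}$ is a $\beta$-root of $f$ and of $g$, so the set of $\beta$-roots of $\gcrd{f,g}$ is contained in $\{\sigma^t(\beta): t \in T_f \cap T_g\}$. Conversely, for each $t \in T_f \cap T_g$, the polynomial $x-\sigma^t(\beta)$ right divides both $f$ and $g$, hence right divides $\gcrd{f,g}$ (recall $Rf + Rg = R\gcrd{f,g}$, so $\gcrd{f,g}$ is a common right divisor of maximal degree and every common right divisor of $f,g$ right divides it). Therefore $\gcrd{f,g}$ has exactly $|T_f \cap T_g|$ distinct $\beta$-roots. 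It remains to check $\deg \gcrd{f,g} = |T_f \cap T_g|$: we have $\deg \gcrd{f,g} \geq |T_f \cap T_g|$ because the lclm of the $|T_f \cap T_g|$ distinct linear factors $x - \sigma^t(\beta)$, $t \in T_f \cap T_g$, right divides $\gcrd{f,g}$ and has degree $|T_f \cap T_g|$ by Lemma \ref{degreelclm}; and $\deg \gcrd{f,g} \leq |T_f \cap T_g|$ since $\gcrd{f,g}$ right divides $x^n-1$ and has all its $\beta$-roots among $\{\sigma^t(\beta): t \in T_f\cap T_g\}$, so again by Lemma \ref{degreelclm} (the "consequently" part: a right divisor of $x^n-1$ has no more roots than its degree, and cannot exceed the number available) its degree is bounded by the number of its $\beta$-roots. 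Hence equality holds and the claim applies, giving $\gcrd{f,g} = \lclm{x-\sigma^t(\beta): t \in T_f \cap T_g}$, which is fully $\beta$-decomposable.

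For the lclm: $\lclm{f,g}$ is a common left multiple of $f$ and $g$, so it is a left multiple of $\lclm{x-\sigma^t(\beta): t \in T_f \cup T_g}$ (this latter polynomial is a common left multiple of all $x-\sigma^t(\beta)$ for $t \in T_f$, hence of $f = \lclm{x-\sigma^t(\beta): t\in T_f}$, and likewise for $g$); conversely each $x - \sigma^t(\beta)$ with $t \in T_f \cup T_g$ right divides $f$ or $g$, hence right divides $\lclm{f,g}$, so $\lclm{x-\sigma^t(\beta): t\in T_f\cup T_g} \mid_r \lclm{f,g}$. Since $\lclm{f,g}$ right divides $x^n-1$ (as $f,g$ do and $x^n-1$ is a common left multiple), comparing degrees via Lemma \ref{degreelclm} — $\deg \lclm{f,g} \leq |T_f \cup T_g|$ from the number of its $\beta$-roots, and $\deg \lclm{f,g} \geq |T_f\cup T_g|$ from the degree of the lclm it is a left multiple of — yields $\lclm{f,g} = \lclm{x-\sigma^t(\beta): t \in T_f \cup T_g}$, fully $\beta$-decomposable.

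The main obstacle is the bookkeeping around degrees: one must use Lemma \ref{degreelclm} in both directions, once to say that the lclm of a prescribed set of distinct $\beta$-linear factors has degree equal to that set's cardinality, and once (via its "consequently" clause, applied to right divisors of $x^n-1$) to bound the degree of $\gcrd{f,g}$ or $\lclm{f,g}$ above by its number of $\beta$-roots. The cleanest route is probably to isolate the characterization in my first claim as a small preliminary lemma — essentially that for divisors of $x^n - 1$, "fully $\beta$-decomposable" is equivalent to "$\deg = $ number of $\beta$-roots" — after which the gcrd/lclm statements follow from the elementary fact that a linear factor $x - \sigma^t(\beta)$ right divides $\lclm{f,g}$ iff it right divides $f$ or $g$, and right divides $\gcrd{f,g}$ iff it right divides both.
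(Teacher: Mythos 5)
Your preliminary characterization and your treatment of $\lclm{f,g}$ are essentially sound: the mutual right-divisibility you establish between $\lclm{f,g}$ and $\lclm{\{x-\sigma^t(\beta)\}_{t\in T_f\cup T_g}}$ already yields their equality, with no degree count needed (and the paper obtains the same identity directly from associativity of the lclm). The genuine gap is in the gcrd part, at the upper bound $\deg \gcrd{f,g}\leq |T_f\cap T_g|$. You justify it by the principle that a right divisor of $x^n-1$ has degree at most its number of $\beta$-roots, attributing this to the ``consequently'' clause of Lemma \ref{degreelclm}. That clause only concerns polynomials which are already of the form $\lclm{\{x-\sigma^t(\beta)\}_{t\in T}}$ (it locates their $\beta$-roots inside $T$); applying it to $\gcrd{f,g}$ presupposes that the gcrd is fully $\beta$-decomposable, which is precisely what you must prove. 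Moreover the principle is false for general right divisors of $x^n-1$: the linear right divisors of $x^n-1$ are all $x-\gamma$ with $\norm{n}{\gamma}=1$, and such $\gamma$ need not lie in $\{\beta,\sigma(\beta),\dots,\sigma^{n-1}(\beta)\}$. The paper's own first example exhibits this: $\rho = x + a^{981}$ right divides the fully $\beta$-decomposable error locator $\lambda$ (hence right divides $x^6-1$), has degree one, yet $\rho N$ has no zero coordinate, i.e.\ $\rho$ has no $\beta$-roots at all. So degree can strictly exceed the number of $\beta$-roots, and your bound does not follow.

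The repair is the paper's argument: having identified $\lclm{f,g}=\lclm{\{x-\sigma^t(\beta)\}_{t\in T_f\cup T_g}}$, which has degree $|T_f\cup T_g|$ by Lemma \ref{degreelclm}, use the identity $\deg f+\deg g=\deg\gcrd{f,g}+\deg\lclm{f,g}$ valid in $R$ to get $\deg\gcrd{f,g}=|T_f|+|T_g|-|T_f\cup T_g|=|T_f\cap T_g|$; combined with your correct observation that $\lclm{\{x-\sigma^t(\beta)\}_{t\in T_f\cap T_g}}$, of that same degree, right divides $\gcrd{f,g}$, this forces equality and hence full $\beta$-decomposability of the gcrd. (The same false degree-versus-roots principle reappears in your closing degree comparison for the lclm, but there it is harmless, since the mutual divisibility you already proved suffices.)
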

\begin{proof}
Since \(f,g\) are fully $\beta$-decomposable, there exist subsets \(T_1,T_2 \subseteq \{0, \dots, n-1\}\) such that 
\[
f = \lclm{\{x-\sigma^{i}(\beta)\}_{i\in T_1}}  \text{ and } g =  \lclm{\{x-\sigma^{i}(\beta)\}_{i\in T_2}}.
\]
Hence, it is straightforward that
\[
\lclm{f,g} = \lclm{\{x-\sigma^{i}(\beta)\}_{i\in T_1\cup T_2}}.
\]
On the other hand, 
\[
\lclm{\{x-\sigma^{i}(\beta)\}_{i\in T_1\cap T_2}} \mid_r \gcrd{f,g}.
\]
The formula \(\deg f + \deg g = \deg \gcrd{f,g} + \deg \lclm{f,g}\) and Lemma \ref{degreelclm} give the equality.
\end{proof}

We now may define the class of skew cyclic codes for which the decoding algorithm of the next section can be applied.

\begin{definition}\label{RS}
Under the conditions and notation of this section, a \emph{skew Reed-Solomon (RS) code} of designed Hamming distance $\delta$ is a skew cyclic code $\mathcal{C}$ such that $\mathfrak{v}^{-1}(\mathcal{C})$ is generated by a polynomial $\lclm{x-\sigma^r(\beta), x-\sigma^{r+1}(\beta), \ldots , x-\sigma^{r+\delta-2}(\beta)}$ for some $r\geq 0$.
\end{definition}

\begin{theorem}\label{MDScode}
A skew RS code of designed Hamming distance $\delta$ has Hamming distance $\delta$. Consequently, it is an MDS code. 
\end{theorem}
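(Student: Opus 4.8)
The plan is to establish the two inequalities $\distance(\mathcal{C}) \geq \delta$ and $\distance(\mathcal{C}) \leq \delta$ separately, where $\mathcal{C}$ is the skew RS code generated (in $\mathcal{R}$) by $g = \lclm{x-\sigma^r(\beta), \dots, x-\sigma^{r+\delta-2}(\beta)}$. The MDS conclusion is then automatic from the Singleton bound: by Lemma \ref{degreelclm}, $\deg g = \delta - 1$, so $\dim_L \mathcal{C} = n - (\delta-1) = n - \delta + 1$, and Singleton forces $\distance(\mathcal{C}) \leq n - \dim \mathcal{C} + 1 = \delta$. So the whole content is the lower bound $\distance(\mathcal{C}) \geq \delta$, i.e. that every nonzero codeword has weight at least $\delta$.

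**The lower bound.** Here I would argue via the parity-check structure encoded by the matrix $N$ of \eqref{evaluationmatrix}. A polynomial $f \in \mathcal{R}$ of degree $< n$ lies in $\mathcal{C} = \mathfrak{v}(\mathcal{R}g)$ precisely when $x - \sigma^j(\beta) \mid_r f$ for all $j \in \{r, r+1, \dots, r+\delta-2\}$, which by \eqref{findingroots} means $\sum_{i=0}^{n-1} f_i \sigma^{i+j}(\alpha) = 0$ for each such $j$; equivalently, the entries of $\mathfrak{v}(f)N$ indexed by $r, \dots, r+\delta-2$ all vanish. Suppose for contradiction that a nonzero $f \in \mathcal{C}$ has weight $w \leq \delta - 1$, say with nonzero coefficients exactly in positions $i_1 < \dots < i_w$. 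Writing out the $\delta - 1$ vanishing conditions gives a homogeneous linear system whose coefficient matrix (after discarding the zero coefficients of $f$) is the $w \times (\delta-1)$ — hence, since $w \leq \delta-1$, it contains a $w \times w$ — submatrix with rows indexed by $i_1, \dots, i_w$ and columns by $r, \dots, r+\delta-2$, of the form $\big(\sigma^{i_a + r + b}(\alpha)\big)_{a,b}$. Factoring $\sigma^{i_a}$ out of the $a$-th row and $\sigma^r$ out overall (an automorphism, hence harmless for nonvanishing), one is left with a determinant of the shape treated in the Circulant Lemma (Lemma \ref{circulantlemma}) applied to the normal basis $\{\alpha, \sigma(\alpha), \dots, \sigma^{n-1}(\alpha)\}$: namely $\det\big(\sigma^{i_a + b}(\alpha)\big)_{1 \le a,b \le w} = \det\big(\sigma^b(\sigma^{i_a}(\alpha))\big)$, which is nonzero since the rows $\sigma^{i_a}(\alpha)$, $a = 1, \dots, w$, are distinct elements of a normal basis. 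Hence the only solution is $f_{i_1} = \dots = f_{i_w} = 0$, contradicting $f \neq 0$.

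**Where care is needed.** The step that needs the most attention is the bookkeeping of exponents and the reduction to exactly the determinant form of Lemma \ref{circulantlemma}. That lemma is stated for a determinant with successive powers $\alpha_{k}, \sigma(\alpha_{k}), \dots, \sigma^{t-1}(\alpha_{k})$ along the rows, with $\{\alpha_0, \dots, \alpha_{n-1}\}$ an arbitrary $L^\sigma$-basis of $L$. In our situation the natural basis is the normal basis $\{\sigma^j(\alpha)\}_{j=0}^{n-1}$, so I would set $\alpha_j := \sigma^j(\alpha)$; then the relevant rows correspond to the indices $k_a = i_a$ (taken mod $n$, which is legitimate since $\sigma^n = \mathrm{id}$), and these are genuinely distinct because $i_1 < \dots < i_w$ all lie in $\{0, \dots, n-1\}$. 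One must also check that applying $\sigma^{-r}$ (or pulling powers of $\sigma$ out of rows) does not affect whether the determinant is zero — true because $\sigma$ is a field automorphism. A cleaner packaging, which I would likely prefer to write up, is to invoke Lemma \ref{lineardependence} or Lemma \ref{q=p=m} directly: the condition $f \in \mathcal{C}$ says $\mathfrak{v}(f)$ is orthogonal (in the twisted sense via $N$) to $\delta-1$ consecutive coordinate directions, and a weight-$\le \delta-1$ vector with small support forces a linear dependence among $\le \delta-1$ of the vectors $\alpha_k^{[\sigma]}$, contradicting the independence guaranteed after Lemma \ref{circulantlemma}. Either route is short; the only genuine obstacle is making sure the ``consecutive $\beta$-roots'' $\sigma^r(\beta), \dots, \sigma^{r+\delta-2}(\beta)$ really do yield, through \eqref{normproperties}, a block of columns of $N$ that is a Circulant-Lemma-type matrix in the $\sigma^j(\alpha)$, which is exactly what the identity $\norm{i}{\sigma^k(\beta)} = \sigma^k(\alpha)^{-1}\sigma^{k+i}(\alpha)$ delivers.
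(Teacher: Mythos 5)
Your proof is correct and takes the intended route: the paper's own ``proof'' is only a citation to \cite[Theorem 4]{GLNSugi}, and the argument there is precisely your combination of a BCH-style lower bound --- a nonzero codeword supported on at most $\delta-1$ positions would be annihilated by a $w\times w$ matrix of consecutive evaluation columns that Lemma \ref{circulantlemma} shows is nonsingular --- with the Singleton bound and the dimension count $\dim_{\bfield}\mathcal{C}=n-\delta+1$ coming from Lemma \ref{degreelclm}. The only blemish is the phrase about ``factoring $\sigma^{i_a}$ out of the $a$-th row,'' which is not a valid determinant manipulation (an automorphism cannot be pulled out of a single row), but it is also unnecessary: the matrix $\bigl(\sigma^{i_a+r+b}(\alpha)\bigr)_{a,b}$ is already literally of the form in Lemma \ref{circulantlemma} for the normal basis, with $\alpha_{k_a}=\sigma^{(i_a+r)\bmod n}(\alpha)$ and the indices $k_a$ distinct because $a\mapsto i_a+r \bmod n$ is injective.
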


\begin{proof}
The result can be proved analogously to the proof of \cite[Theorem 4]{GLNSugi}.
\end{proof}

\section{A Peterson-Gorenstein-Zierler decoding algorithm}\label{PGZsec}

Throughout this section $\mathcal{C}$ denotes a skew RS code as described in Definition \ref{RS}. Without loss of generality, we may assume that $\mathcal{C}$ is a narrow-sense skew RS code, i.e. we may set $r=0$. This is because we always may write \(\alpha' = \sigma^r(\alpha)\), which also provides a normal basis, and then $\sigma^r(\beta) = \beta'=(\alpha')^{-1}\sigma(\alpha')$, so $\lclm{x-\beta',\ldots ,x-\sigma^{\delta-2}(\beta')}$ is a generator of $\mathcal{C}$. Therefore,  suppose that  the left ideal $\mathfrak{v}^{-1}(\mathcal{C})$ is generated by \(g=\lclm{x-\beta, x-\sigma(\beta), \dots , x-\sigma^{\delta-2}(\beta)}\) for some $2\leq \delta \leq n$. By Theorem \ref{MDScode}, the Hamming distance of $\mathcal{C}$ is exactly $\delta$ and, following Algorithm \ref{PGZ} below, it can correct up to $t=\left\lfloor \frac{\delta-1}{2} \right\rfloor$ errors, the error correction capability of the code.

Suppose now that a message $m=(m_0,\ldots ,m_{n-\delta})$  must be transmitted through a noisy channel. Since we are identifying $\mathcal{C}$ with $\mathfrak{v}^{-1}(\mathcal{C})$, $m=\sum_{i=0}^{n-\delta}m_ix^i$. The message $m$ is encoded to a codeword  \(c=mg\) and \(y = c + e\) is received, where \(e = e_1 x^{k_1} + \dots + e_\nu x^{k_\nu}\), with \(\nu \leq t\), is the error polynomial.  The purpose of this section is to develop an algorithm, following the scheme of the classical Peterson-Gorenstein-Zierler decoding algorithm, for computing this error.

For each \(0 \leq i \leq 2t-1 \), the \(i\)th \emph{syndrome} $s_i$ of the received polynomial $y$ is defined to be the left remainder of \(y\) by \(x-\sigma^i(\beta)\). Since \(c\) is right divisible by \(x-\sigma^i(\beta)\) for $i=0,\ldots ,\delta-2$; it follows, by \eqref{normproperties}, that
\begin{equation}\label{eq1}
\begin{split}
s_i &= \sum_{j=0}^{n-1} y_j \norm{j}{\sigma^i(\beta)} = \sum_{j=1}^\nu e_j \norm{k_j}{\sigma^i(\beta)}  \\ &
= \sum_{j=1}^\nu e_j\sigma^i(\alpha^{-1})\sigma^{i+k_j}(\alpha) = \sigma^i(\alpha^{-1})\sum_{j=1}^\nu e_j\sigma^{i+k_j}(\alpha).
\end{split}
\end{equation}

\begin{proposition}\label{errorvalues}
The error values \((e_1, \dots, e_\nu)\) are the unique solution of the linear system
$$X\,
\begin{pmatrix}
\sigma^{k_1}(\alpha) & \sigma^{k_1+1}(\alpha) & \cdots & \sigma^{k_1+\nu-1}(\alpha) \\
\sigma^{k_2}(\alpha) & \sigma^{k_2+1}(\alpha) & \cdots & \sigma^{k_2+\nu-1}(\alpha) \\
\vdots & \vdots & \ddots  & \vdots \\
\sigma^{k_\nu}(\alpha) & \sigma^{k_\nu+1}(\alpha) & \cdots & \sigma^{k_\nu+\nu-1}(\alpha)\\
\end{pmatrix} 
=
(\alpha s_0, \sigma(\alpha) s_1, \cdots, \sigma^{\nu-1}(\alpha) s_{\nu-1}).$$
\end{proposition}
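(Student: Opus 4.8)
The plan is to start from the syndrome formula \eqref{eq1}, which already expresses each $s_i$ as $\sigma^i(\alpha^{-1})\sum_{j=1}^\nu e_j\sigma^{i+k_j}(\alpha)$ for $0\le i\le 2t-1$; in particular this holds for $0\le i\le \nu-1$, since $\nu\le t$ forces $\nu-1\le 2t-1$. Multiplying the $i$th equation through by $\sigma^i(\alpha)$ gives
\[
\sigma^i(\alpha)\,s_i=\sum_{j=1}^\nu e_j\,\sigma^{i+k_j}(\alpha),\qquad 0\le i\le \nu-1,
\]
and the $\nu$ of these equations are precisely the scalar equations packaged by the matrix identity in the statement: the entry in row $j$, column $i+1$ of the displayed matrix is $\sigma^{k_j+i}(\alpha)$, so $(e_1,\dots,e_\nu)$ times that matrix has $(i+1)$st component $\sum_{j=1}^\nu e_j\sigma^{k_j+i}(\alpha)=\sigma^i(\alpha)s_i$, which is the $(i+1)$st component of the right-hand side. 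Hence $(e_1,\dots,e_\nu)$ is a solution.

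It remains to prove uniqueness, i.e. that the $\nu\times\nu$ coefficient matrix is invertible. First I would recall that $\{\alpha,\sigma(\alpha),\dots,\sigma^{n-1}(\alpha)\}$ is a normal basis, so $\{\sigma^{k_1}(\alpha),\dots,\sigma^{k_\nu}(\alpha)\}$ are $\nu$ distinct elements of this basis (the $k_j$ are distinct, being the positions of the error). Writing $\alpha_{k_j}:=\sigma^{k_j}(\alpha)$, the $(j,i+1)$ entry is $\sigma^i(\alpha_{k_j})$, and the matrix is exactly
\[
\begin{pmatrix}
\alpha_{k_1} & \sigma(\alpha_{k_1}) & \cdots & \sigma^{\nu-1}(\alpha_{k_1})\\
\vdots & \vdots & \ddots & \vdots\\
\alpha_{k_\nu} & \sigma(\alpha_{k_\nu}) & \cdots & \sigma^{\nu-1}(\alpha_{k_\nu})
\end{pmatrix}.
\]
Its determinant is nonzero by the Circulant Lemma (Lemma~\ref{circulantlemma}), applied with $t=\nu\le n$ and the subset $\{k_1,\dots,k_\nu\}\subseteq\{0,\dots,n-1\}$. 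Therefore the coefficient matrix is nonsingular, the linear system has a unique solution, and that solution is $(e_1,\dots,e_\nu)$.

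The only point requiring mild care is the reindexing: one must observe that the chosen normal basis $\{\alpha,\dots,\sigma^{n-1}(\alpha)\}$ is $\sigma$-stable as a set, so each $\sigma^{k_j}(\alpha)$ is again a basis element $\alpha_{k_j}$ with distinct indices, which is what lets us invoke Lemma~\ref{circulantlemma} verbatim. There is no real obstacle here — the substance of the argument is entirely carried by the Circulant Lemma, and the rest is bookkeeping to match the displayed matrix to the syndrome equations \eqref{eq1}.
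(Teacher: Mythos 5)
Your argument is correct and coincides with the paper's own proof: both rewrite the coefficient matrix as $\bigl(\sigma^{i}(\sigma^{k_j}(\alpha))\bigr)_{j,i}$, invoke the Circulant Lemma (Lemma~\ref{circulantlemma}) with the subset $\{k_1,\dots,k_\nu\}$ to get nonsingularity, and use \eqref{eq1} in the form $\sigma^i(\alpha)s_i=\sum_{j=1}^\nu e_j\sigma^{i+k_j}(\alpha)$ to see that $(e_1,\dots,e_\nu)$ solves the system. No changes needed.
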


\begin{proof}
Observe that
\[
\begin{pmatrix}
\sigma^{k_1}(\alpha) & \sigma^{k_1+1}(\alpha) & \cdots & \sigma^{k_1+\nu-1}(\alpha) \\
\sigma^{k_2}(\alpha) & \sigma^{k_2+1}(\alpha) & \cdots & \sigma^{k_2+\nu-1}(\alpha) \\
\vdots & \vdots & \ddots & \vdots \\
\sigma^{k_\nu}(\alpha) & \sigma^{k_\nu+1}(\alpha) & \cdots & \sigma^{k_\nu+\nu-1}(\alpha)\\
\end{pmatrix} 
=
\begin{pmatrix}
\sigma^{k_1}(\alpha) & \sigma(\sigma^{k_1}(\alpha)) & \cdots & \sigma^{\nu-1}(\sigma^{k_1}(\alpha)) \\
\sigma^{k_2}(\alpha) & \sigma(\sigma^{k_2}(\alpha)) & \cdots & \sigma^{\nu-1}(\sigma^{k_2}(\alpha)) \\
\vdots & \vdots & \ddots & \vdots \\
\sigma^{k_\nu}(\alpha) & \sigma(\sigma^{k_\nu}(\alpha)) & \cdots & \sigma^{\nu-1}(\sigma^{k_\nu}(\alpha))\\
\end{pmatrix},
\]
which has non zero determinant, by Lemma \ref{circulantlemma}. By \eqref{eq1}
\[
\sigma^i(\alpha)s_i=\sum_{j=1}^\nu e_j\sigma^{i+k_j}(\alpha),
\]
so the result follows.
\end{proof}

By Proposition \ref{errorvalues}, the decoding process is therefore reduced to find the error positions \(\{k_1, \dots, k_\nu\}\). We define the \emph{error locator} polynomial as 
\[
\lambda = \lclm{x-\sigma^{k_1}(\beta), x-\sigma^{k_2}(\beta), \ldots , x-\sigma^{k_\nu}(\beta)}.
\] 
By Lemma \ref{degreelclm}, $\lambda$ has degree $\nu$, and, once $\lambda$ is known, the error positions can be determined. 

Recall that $f=\sum_{k=0}^{n-1} f_k x^k\in \mathcal{R}\lambda$ if and only if $x-\sigma^{k_j}(\beta)\mid_r f$ for all $j=1,\ldots ,\nu$; or, equivalently, $\sum_{k=0}^{n-1} f_k \norm{k}{\sigma^{k_j}(\beta)}=0$ for all $j=1,\ldots , \nu$. Therefore, $(f_0,\ldots, f_{n-1})\in \mathfrak{v}(\mathcal{R}\lambda)$ if and only if $f$ satisfies the equation $(f_0,\ldots, f_{n-1}) T=0$, where
\[
T=\begin{pmatrix}
\norm{0}{\sigma^{k_1}(\beta)} & \norm{0}{\sigma^{k_2}(\beta)} & \cdots & \norm{0}{\sigma^{k_\nu}(\beta)} \\
\norm{1}{\sigma^{k_1}(\beta)} & \norm{1}{\sigma^{k_2}(\beta)} & \cdots & \norm{1}{\sigma^{k_\nu}(\beta)} \\
\vdots & \vdots &  & \vdots \\
\norm{n-1}{\sigma^{k_1}(\beta)} & \norm{n-1}{\sigma^{k_2}(\beta)} & \cdots & \norm{n-1}{\sigma^{k_\nu}(\beta)} \\
\end{pmatrix}.
\]
Now, by \eqref{normproperties}, $\norm{k}{\sigma^{k_j}(\beta)}=\sigma^{k_j}(\alpha)^{-1}\sigma^{k_j+k}(\alpha)$, then $\mathcal{R}\lambda$ corresponds to the left kernel of the matrix
\[
\Sigma = \begin{pmatrix}
\sigma^{k_1}(\alpha) & \sigma^{k_2}(\alpha) & \cdots & \sigma^{k_\nu}(\alpha) \\
\sigma^{k_1+1}(\alpha) & \sigma^{k_2+1}(\alpha) & \cdots & \sigma^{k_\nu+1}(\alpha) \\
\vdots & \vdots & \ddots & \vdots \\
\sigma^{k_1+n-1}(\alpha) & \sigma^{k_2+n-1}(\alpha) & \cdots & \sigma^{k_\nu+n-1}(\alpha) \\
\end{pmatrix} = \left( \begin{array}{c} \Sigma_0 \\ \hline \Sigma_1 \end{array} \right),
\]
where $\Sigma_0$ comprises the first $\nu+1$ rows of $\Sigma$. Let us consider the matrix
\[
E=\begin{pmatrix} e_1 & \sigma^{-1}(e_1) & \cdots & \sigma^{-\nu+1}(e_1) \\
e_2 & \sigma^{-1}(e_2) & \ldots & \sigma^{-\nu+1}(e_2) \\
\vdots & \vdots & \ddots & \vdots \\
e_\nu & \sigma^{-1}(e_\nu) & \cdots & \sigma^{-\nu+1}(e_\nu)
\end{pmatrix}.
\]
Thus $S=\Sigma E$ is an $(n \times \nu)$-matrix whose $(k,i)$-component is given by $\sum_{j=1}^{\nu}\sigma^{-i}(e_j)\sigma^{k_j+k}(\alpha)$. Observe that, by (\ref{eq1}), whenever $k+i<2t-1$, this component can be written as $\sigma^{-i}(s_{k+i})\sigma^k(\alpha)$, so that we may divide $S = \left ( \begin{array}{c} S_0 \\ \hline S_1 \end{array} \right )$, where
\[
S_0 = \begin{pmatrix} s_0\alpha & \sigma^{-1}(s_1)\alpha & \ldots & \sigma^{-\nu+1}(s_{\nu-1})\alpha \\
s_1\sigma(\alpha) & \sigma^{-1}(s_2)\sigma(\alpha) & \ldots & \sigma^{-\nu+1}(s_\nu)\sigma(\alpha) \\
\vdots & \vdots & & \vdots \\
s_\nu\sigma^\nu(\alpha) & \sigma^{-1}(s_{\nu+1})\sigma^\nu(\alpha) & \ldots & \sigma^{-\nu+1}(s_{2\nu-1})\sigma^\nu(\alpha) \\
\end{pmatrix}_{(\nu + 1) \times \nu},
\]
whose coefficients can be computed from the received polynomial $y$. In order to calculate the parameter $\nu$, the number of error positions, we may follow the scheme of Peterson-Gorenstein-Zierler algorithm for BCH block codes, see e.g. \cite[\S 5.4]{Huffman/Pless:2010}. For any $1\leq r\leq t$, let us denote by $S^r$ the matrix
\[
S^r = \begin{pmatrix} s_0\alpha & \sigma^{-1}(s_1)\alpha & \ldots & \sigma^{-r+1}(s_{r-1})\alpha \\
s_1\sigma(\alpha) & \sigma^{-1}(s_2)\sigma(\alpha) & \ldots & \sigma^{-r+1}(s_r)\sigma(\alpha) \\
\vdots & \vdots & \ddots & \vdots \\
s_t\sigma^t(\alpha) & \sigma^{-1}(s_{t+1})\sigma^t(\alpha) & \ldots & \sigma^{-r+1}(s_{t+r-1})\sigma^t(\alpha) \\
\end{pmatrix}_{(t + 1) \times r} .\]
Observe that, for all \(r \leq t\), \(S^r = \Sigma^t E^r\), where 
\[
E^r=\begin{pmatrix} e_1 & \sigma^{-1}(e_1) & \cdots & \sigma^{-r+1}(e_1) \\
e_2 & \sigma^{-1}(e_2) & \ldots & \sigma^{-r+1}(e_2) \\
\vdots & \vdots & \ddots & \vdots \\
e_\nu & \sigma^{-1}(e_\nu) & \cdots & \sigma^{-r+1}(e_\nu)
\end{pmatrix}_{\nu \times r}
\]
and 
\[
\Sigma^t = \begin{pmatrix}
\sigma^{k_1}(\alpha) & \sigma^{k_2}(\alpha) & \cdots & \sigma^{k_\nu}(\alpha) \\
\sigma^{k_1+1}(\alpha) & \sigma^{k_2+1}(\alpha) & \cdots & \sigma^{k_\nu+1}(\alpha) \\
\vdots & \vdots & \ddots & \vdots \\
\sigma^{k_1+t}(\alpha) & \sigma^{k_2+t}(\alpha) & \cdots & \sigma^{k_\nu+t}(\alpha) \\
\end{pmatrix}_{(t+1) \times \nu}.
\]

\begin{lemma}\label{ranksrelated}
For each \(r \leq t\), \(\rank S^r = \rank \Sigma E^r = \rank E^r\).
\end{lemma}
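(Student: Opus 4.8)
The plan is to exhibit $\Sigma^{t}$ (and likewise $\Sigma$) as a matrix of full column rank $\nu$, so that left multiplication by it leaves the rank of any factor unchanged; since $S^{r}=\Sigma^{t}E^{r}$ by construction, the chain of equalities will then follow at once.

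First I would identify the relevant submatrix with the one appearing in the Circulant Lemma. With the normal basis $\alpha_{i}=\sigma^{i}(\alpha)$ fixed in Section~\ref{RSsec}, the entry of $\Sigma^{t}$ in row $i$ (with $0\le i\le t$) and column $j$ (with $1\le j\le\nu$) is $\sigma^{k_{j}+i}(\alpha)=\sigma^{i}(\alpha_{k_{j}})$. Hence the block formed by the first $\nu$ rows of $\Sigma^{t}$ is the transpose of
\[
\begin{pmatrix}
\alpha_{k_{1}} & \sigma(\alpha_{k_{1}}) & \cdots & \sigma^{\nu-1}(\alpha_{k_{1}})\\
\vdots & \vdots & \ddots & \vdots\\
\alpha_{k_{\nu}} & \sigma(\alpha_{k_{\nu}}) & \cdots & \sigma^{\nu-1}(\alpha_{k_{\nu}})
\end{pmatrix},
\]
whose determinant is non zero by Lemma~\ref{circulantlemma}, since $\nu\le t\le n$ and $k_{1},\dots,k_{\nu}$ are pairwise distinct. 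Because $\nu\le t+1$, this is an honest $\nu\times\nu$ submatrix of $\Sigma^{t}$, so $\rank\Sigma^{t}=\nu$; the very same computation applied to the first $\nu$ rows of $\Sigma$ gives $\rank\Sigma=\nu$.

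Then I would invoke the elementary fact that a matrix $M$ of full column rank satisfies $\rank(MP)=\rank P$ for every matrix $P$ of compatible size: the column space of $MP$ is the image of the column space of $P$ under the injective linear map $v\mapsto Mv$, hence has the same dimension as the column space of $P$. Taking $M=\Sigma^{t}$ and $P=E^{r}$ gives $\rank S^{r}=\rank(\Sigma^{t}E^{r})=\rank E^{r}$, and taking $M=\Sigma$ and $P=E^{r}$ gives $\rank(\Sigma E^{r})=\rank E^{r}$; combining the two yields the statement.

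The argument is routine and I do not anticipate a genuine obstacle: the only delicate point is the index bookkeeping needed to match the top $\nu\times\nu$ block of $\Sigma^{t}$ with the Circulant Lemma matrix, together with the trivial but necessary check that $\nu\le t+1$, so that this block really lies inside $\Sigma^{t}$. (When $\nu=0$, i.e.\ no error occurs, all the matrices involved are empty and the equalities hold trivially.)
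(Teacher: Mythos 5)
Your proof is correct and follows essentially the same route as the paper: both deduce $\rank\Sigma=\rank\Sigma^{t}=\nu$ from the Circulant Lemma and then conclude that multiplication by these full-column-rank matrices preserves the rank of $E^{r}$. The only cosmetic difference is that the paper justifies the last step via Sylvester's rank inequality, whereas you argue directly from the injectivity of $v\mapsto Mv$; the two justifications are interchangeable here.
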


\begin{proof}
By Lemma \ref{circulantlemma}, \(\rank \Sigma = \rank \Sigma^t = \nu\). Using Sylvester's rank inequality,
\[
\min \{\rank \Sigma, \rank E^r\} \geq \rank \Sigma E^r \geq \rank \Sigma + \rank E^r - \nu = \rank E^r.
\]
Then \(\rank \Sigma E^r = \rank E^r\). Analogously, \(\rank S^r = \rank E^r\). 
\end{proof}

Hence, we may calculate the greatest $r$ such that $S^r$ has full rank. By Lemma \ref{ranksrelated}, it is also the greatest integer \(r \leq t\) such that \(E^r\) and \(\Sigma E^r\) have full rank. We shall denote by $\mu$ such a maximum.

\begin{lemma}\label{rankES}
For each \(\mu \leq r \leq t\), \(\mu = \rank E^r = \rank S^r\). Consequently, \(\mu \leq \nu\). 
\end{lemma}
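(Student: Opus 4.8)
The plan is to reduce the whole statement to the behaviour of $\rank E^r$ as $r$ increases, exploiting the semilinear structure of the columns of $E^r$. By Lemma \ref{ranksrelated} we may replace $\rank S^r$ by $\rank E^r$ everywhere, so it suffices to prove that $\rank E^r = \mu$ for every $\mu \leq r \leq t$. Writing $v = \transpose{(e_1, \dots, e_\nu)}$ and letting $\sigma$ act on $L^\nu$ coordinatewise, the $j$-th column of $E^r$ is $\sigma^{-(j-1)}(v)$; thus $E^r$ consists of the first $r$ columns of $E^{r+1}$, and consequently $\rank E^r \leq \rank E^{r+1} \leq \rank E^r + 1$.

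The key point I would establish is a stabilization property: \emph{if $\rank E^{r+1} = \rank E^r$ for some $r$, then $\rank E^s = \rank E^r$ for every $s \geq r$}. Indeed, the hypothesis says that the last column $\sigma^{-r}(v)$ of $E^{r+1}$ lies in the $L$-span $W$ of the columns $v, \sigma^{-1}(v), \dots, \sigma^{-(r-1)}(v)$ of $E^r$, say $\sigma^{-r}(v) = \sum_{i=0}^{r-1} c_i\, \sigma^{-i}(v)$ with $c_i \in L$. Applying $\sigma^{-1}$ and using that $\sigma$ is a field automorphism we get $\sigma^{-(r+1)}(v) = \sum_{i=0}^{r-1} \sigma^{-1}(c_i)\, \sigma^{-(i+1)}(v)$; each summand is a scalar multiple of one of $\sigma^{-1}(v), \dots, \sigma^{-(r-1)}(v) \in W$ or of $\sigma^{-r}(v) \in W$, so $\sigma^{-(r+1)}(v) \in W$ as well. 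Hence every column of $E^{r+2}$ lies in $W$, forcing $\rank E^{r+2} = \dim W = \rank E^{r+1}$, and the claim follows by induction on $s$.

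To conclude, recall that $\rank E^\mu = \rank S^\mu = \mu$ by the very definition of $\mu$; since for $s \leq \mu$ the matrix $E^s$ is formed by the first $s$ columns of $E^\mu$, we also obtain $\rank E^s = s$ for all $s \leq \mu$. If $\mu = t$ the assertion is already proved, so assume $\mu + 1 \leq t$. Then the maximality of $\mu$ gives $\rank E^{\mu+1} \neq \mu + 1$, while $\mu \leq \rank E^{\mu+1} \leq \mu + 1$ by the previous paragraph; hence $\rank E^{\mu+1} = \mu = \rank E^\mu$. The stabilization property now yields $\rank E^r = \mu$ for all $r \geq \mu$, in particular for every $\mu \leq r \leq t$, and by Lemma \ref{ranksrelated} also $\rank S^r = \mu$. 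Finally $\mu = \rank E^\mu \leq \nu$ because $E^\mu$ has exactly $\nu$ rows.

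The only genuinely delicate step is the stabilization property: one has to check that applying $\sigma^{-1}$ to a vanishing $L$-linear combination of the columns of $E^{r+1}$ produces again a valid $L$-linear combination of the columns of $E^{r+2}$. This is precisely where the hypothesis that $\sigma$ is an automorphism of $L$ (and not merely an $L^\sigma$-linear map) is used; it is the same circulant-type phenomenon underlying Lemma \ref{circulantlemma}, except that here the elements $e_1, \dots, e_\nu$ need not be part of a basis, so $E^r$ may genuinely fail to have full rank.
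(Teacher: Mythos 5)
Your proposal is correct and follows essentially the same route as the paper: reduce to $\rank E^r$ via Lemma \ref{ranksrelated}, use the maximality of $\mu$ to place the $(\mu+1)$th column of $E^r$ in the span of the first $\mu$ columns, and then apply $\sigma^{-1}$ repeatedly (your ``stabilization'' induction is just an explicit packaging of the paper's ``repeating the process'') to show all later columns stay in that span, whence $\rank E^r=\mu\leq\nu$ since $E^r$ has $\nu$ rows. The only cosmetic difference is that you isolate the semilinear stabilization step as a standalone claim, which the paper argues inline.
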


\begin{proof}
By Lemma \ref{ranksrelated}, \(\mu = \rank E^\mu = \rank S^\mu\), so assume \(\mu < r\). By maximality of \(\mu\), the \((\mu+1)\)th column of \(E^r\) is a linear combination of the \(\mu\) preceding columns. Applying \(\sigma^{-1}\) we get that the \((\mu + 2)\)th column is a linear combination of the columns at positions from the second to the \((\mu+1)\)th, and hence a linear combination of the first \(\mu\) columns. Repeating the process we obtain that all columns from the \((\mu+1)\)th to \(r\)th are linear combinations of the first \(\mu\) columns, which implies that \(\rank E^r = \mu\). Since \(E^r\) has \(\nu\) rows, \(\mu \leq \nu\). Finally, \(\rank S^r = \mu\) again by Lemma \ref{ranksrelated}.
\end{proof}

\begin{proposition}\label{Vskewcode}
The left kernel \(V\) of the matrix \(\Sigma E^\mu\) is a skew cyclic code. Consequently, \(\tovector^{-1}(V) = \mathcal{R}\rho\) for some polynomial \(\rho \in \mathcal{R}\) of degree \(\mu\). Moreover, \(\rho\) is a right divisor of \(\lambda\). 
\end{proposition}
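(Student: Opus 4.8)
The plan is to prove the three assertions in turn. The heart of the matter is that $V$ is skew cyclic; once that is in hand, the existence of $\rho$ with $\deg\rho=\mu$ is a rank count and $\rho\mid_r\lambda$ is an inclusion of left kernels.

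The key point is that multiplication by $\Sigma$ on the right turns the cyclic shift into a coordinatewise twist by $\sigma$. Indeed, each column of $\Sigma$ equals $\gamma^{[\sigma]}$ for $\gamma=\sigma^{k_j}(\alpha)$, and since $\sigma$ has order $n$ (so that $\gamma=\sigma^n(\gamma)$ for $\gamma\in L$) a short computation gives, for every row vector $f=(f_0,\dots,f_{n-1})$ and every $\gamma\in L$,
\[
(\sigma(f_{n-1}),\sigma(f_0),\dots,\sigma(f_{n-2}))\cdot\gamma^{[\sigma]}=\sigma\bigl(f\cdot\gamma^{[\sigma]}\bigr).
\]
Thus, writing $f'$ for the cyclic shift of $f$ (which corresponds to multiplication by $x$ in $\mathcal{R}$), we get $f'\Sigma=\sigma(f\Sigma)$, with $\sigma$ applied coordinatewise on $L^\nu$.

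To see that $V$ is skew cyclic, put $K=\{w\in L^\nu:wE^\mu=0\}$, so that $V=\{f:f\Sigma\in K\}$, and check that $K$ is stable under the coordinatewise $\sigma$. Membership $w\in K$ is the system $\sum_{j=1}^\nu w_j\sigma^{1-i}(e_j)=0$ for $i=1,\dots,\mu$; applying $\sigma^{-1}$ to the $i$-th equation shows that $\sigma(w)\in K$ is equivalent to the same equations for $i=2,\dots,\mu+1$. Those with $i\le\mu$ hold because $w\in K$, so the only thing at stake is the equation with $i=\mu+1$, namely $\sum_j w_j\sigma^{-\mu}(e_j)=0$ --- and here the maximality of $\mu$ enters. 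If $\mu=\nu$ then $E^\mu$ is invertible, $K=0$, and there is nothing to prove; otherwise $\mu<\nu\le t$, so $\mu+1\le t$ and Lemma \ref{rankES} gives $\rank E^{\mu+1}=\mu=\rank E^\mu$. Hence the $(\mu+1)$-th column of $E^{\mu+1}$, which is $\transpose{(\sigma^{-\mu}(e_1),\dots,\sigma^{-\mu}(e_\nu))}$, is an $L$-linear combination of the first $\mu$ columns, and substituting that relation into the equations defining $K$ yields exactly $\sum_j w_j\sigma^{-\mu}(e_j)=0$ for every $w\in K$. Combining this with the displayed identity: $f\in V\Rightarrow f\Sigma\in K\Rightarrow\sigma(f\Sigma)\in K\Rightarrow f'\Sigma\in K\Rightarrow f'\in V$. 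So $V$ is shift-invariant, and being an $L$-subspace (a left kernel) as well, $\tovector^{-1}(V)$ is closed under multiplication by $x$ and by scalars, hence is a left ideal of $\mathcal{R}$; that is, $V$ is a skew cyclic code.

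It remains to identify $\rho$. Since $\mathcal{R}$ is a left PID, $\tovector^{-1}(V)=\mathcal{R}\rho$ for some $\rho$, and by Sylvester's rank inequality (as in the proof of Lemma \ref{ranksrelated}), $\rank\Sigma E^\mu=\rank E^\mu=\mu$, so $\dim_L V=n-\mu$ and therefore $\deg\rho=\mu$. Finally $\tovector(\mathcal{R}\lambda)$ is the left kernel of $\Sigma$, which is contained in the left kernel of $\Sigma E^\mu=\tovector(\mathcal{R}\rho)$; hence $\mathcal{R}\lambda\subseteq\mathcal{R}\rho$, i.e., $\rho\mid_r\lambda$. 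The one genuinely delicate point is the $\sigma$-stability of $K$: the coordinatewise $\sigma$ is only $\sigma$-semilinear, so nothing formal forces it to preserve $\ker(\cdot E^\mu)$, and the maximality of $\mu$ --- through Lemma \ref{rankES} --- is precisely what absorbs the single surplus equation produced by the shift; the remaining steps are bookkeeping.
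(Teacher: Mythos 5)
Your proof is correct and follows essentially the same route as the paper: the maximality of $\mu$ (through Lemma \ref{rankES}) makes the $(\mu+1)$-th column of $E^{\mu+1}$ a linear combination of the columns of $E^\mu$, and your identity $f'\Sigma=\sigma(f\Sigma)$ is exactly the paper's permutation-matrix computation $\Sigma=\left(\begin{smallmatrix}0&1\\I_{n-1}&0\end{smallmatrix}\right)\sigma(\Sigma)$, with your intermediate kernel $K$ merely repackaging the paper's step $(a_0,\ldots,a_{n-1})\Sigma E^{\mu+1}=0$; the rank count for $\deg\rho$ and the kernel inclusion giving $\rho\mid_r\lambda$ are likewise identical. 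Your explicit handling of the case $\mu=\nu$ (where $E^{\mu+1}$ lies outside the range of Lemma \ref{rankES} and the extra column is absorbed because $E^\mu$ is invertible) is a slightly more careful touch than the paper's bare appeal to maximality, but it does not change the argument.
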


\begin{proof}
The first statement is ensured by proving that, if $(a_0, \dots, a_{n-2}, a_{n-1})\in V \subseteq L^n$, then we have $(\sigma(a_{n-1}), \sigma(a_0), \dots, \sigma(a_{n-2})) \in V$. This is due to the fact that $xa_ix^i=\sigma(a_i)x^{i+1}$ for every $0\leq i\leq n-1$. Suppose then $(a_0,a_1,\ldots, a_{n-1}) \Sigma E^{\mu} = 0$. The maximality of \(\mu\) ensures that the last column of $E^{\mu+1}$ is a linear combination of the former $\mu$ columns. Hence $(a_0,a_1,\ldots, a_{n-1}) \Sigma E^{\mu+1}=0$. Therefore,
\[
\begin{split}
0 &= (a_0,a_1,\ldots, a_{n-1}) \Sigma E^{\mu+1} \\
&= (a_0,a_1,\ldots, a_{n-1}) \left(\begin{array}{c|c} 0 & I_{n-1}\\ \hline1 & 0 \end{array} \right) \left(\begin{array}{c|c} 0 & 1\\ \hline I_{n-1} & 0\end{array} \right) \Sigma E^{\mu+1} \\
&= (a_{n-1},a_0,\ldots, a_{n-2})  \left(\begin{array}{c|c} 0 & 1\\ \hline I_{n-1} & 0\end{array} \right) \Sigma E^{\mu+1}.
\end{split}
\]
Applying $\sigma$ to this matrix equation (componentwise),
\[
(\sigma(a_{n-1}),\sigma(a_0),\ldots, \sigma(a_{n-2})) \left(\begin{array}{c|c} 0 & 1\\ \hline I_{n-1} & 0\end{array} \right) \sigma(\Sigma) \sigma(E^{\mu+1}) = 0.
\]
Observe that $\Sigma = \left(\begin{array}{c|c} 0 & 1\\ \hline I_{n-1} & 0\end{array} \right) \sigma(\Sigma)$ and \(\sigma(E^{\mu+1}) = \left(\begin{array}{c|c} \begin{smallmatrix} \sigma(e_1) \\ \vdots \\ \sigma(e_{\nu}) \end{smallmatrix} & E^{\mu}\end{array}\right)\), so 
\[
(\sigma(a_{n-1}),\sigma(a_0),\ldots, \sigma(a_{n-2})) \Sigma \left(\begin{array}{c|c} \begin{smallmatrix} \sigma(e_1) \\ \vdots \\ \sigma(e_{\nu}) \end{smallmatrix} & E^{\mu}\end{array}\right) = 0.
\]
In particular, $(\sigma(a_{n-1}),\sigma(a_0),\ldots, \sigma(a_{n-2})) \Sigma E^{\mu} = 0$, so $(\sigma(a_{n-1}),\sigma(a_0),\ldots, \sigma(a_{n-2})) \in V$, as desired. Now, any left ideal of $\mathcal{R}$ is principal, so $\mathfrak{v}^{-1}(V)$ is generated by a polynomial $\rho\in\mathcal{R}$. Since $\mathfrak{v}(\mathcal{R}\lambda)$ is the left kernel of the matrix $\Sigma$, it follows that \(\mathcal{R}\lambda \subseteq \mathcal{R}\rho\), hence \(\rho\) right divides \(\lambda\). Finally, the dimension of \(\mathcal{R}\rho\) as an \(\bfield\)-vector space is \(n - \deg \rho\). By Lemma \ref{ranksrelated}, \(\rank \Sigma E^\mu = \mu\), so \(\deg \rho = \mu\). 
\end{proof}

\begin{lemma}\label{rhocomputation}
The reduced column echelon form of \(S^t\) is 
\[
\operatorname{rcef}(S^t) = \left(\begin{array}{c|c} \begin{array}{c} I_\mu \\ \hline a_0 \cdots a_{\mu-1} \\ \hline H' \end{array} & 0_{(t+1)\times(t-\mu)} \end{array}\right),
\]
where \(I_\mu\) is the \(\mu \times \mu\) identity matrix and \(a_0, \dots, a_{\mu-1} \in \bfield\) such that \(\rho = x^\mu - \sum_{i=0}^{\mu-1} a_i x^i\).
\end{lemma}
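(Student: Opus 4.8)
The plan is to work not with $S^t$ directly but with the $n\times\mu$ matrix $W = \Sigma E^\mu$, whose left kernel is, by Proposition \ref{Vskewcode}, exactly $\tovector(\mathcal{R}\rho)$ with $\deg\rho=\mu$. Writing $\rho = x^\mu-\sum_{i=0}^{\mu-1}a_ix^i$, its coordinate vector is $\tovector(\rho)=(-a_0,\dots,-a_{\mu-1},1,0,\dots,0)$, which therefore lies in the left kernel of $W$. Denoting the rows of $W$ by $W_0,\dots,W_{n-1}\in\bfield^\mu$, the relation $\tovector(\rho)W=0$ reads
\[
W_\mu=\sum_{i=0}^{\mu-1}a_iW_i,
\]
and this is the source of the row $(a_0\cdots a_{\mu-1})$ appearing in the statement.

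Next I would check that $W_0,\dots,W_{\mu-1}$ are linearly independent, hence an $\bfield$-basis of $\bfield^\mu$. Indeed a relation $\sum_{i=0}^{\mu-1}b_iW_i=0$ means that $(b_0,\dots,b_{\mu-1},0,\dots,0)$ lies in the left kernel of $W$, i.e. $\rho$ right divides a polynomial of degree $<\mu=\deg\rho$, forcing $b_0=\dots=b_{\mu-1}=0$. Since $\rank W = n-\dim_\bfield\tovector(\mathcal{R}\rho)=\mu$, every row of $W$ is a unique combination $W_R=\sum_{i=0}^{\mu-1}\gamma_{R,i}W_i$, with $\gamma_{\mu,i}=a_i$. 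Restricting to the top $t+1$ rows one obtains $W^t:=\Sigma^tE^\mu$, whose column space is the ``graph'' subspace
\[
\bigl\{\,(z_0,\dots,z_{\mu-1},\textstyle\sum_i\gamma_{\mu,i}z_i,\dots,\sum_i\gamma_{t,i}z_i):z_j\in\bfield\,\bigr\},
\]
because $W_0,\dots,W_{\mu-1}$ already run over all of $\bfield^\mu$. For such a subspace the reduced column echelon form is immediate: the pivots occupy the first $\mu$ rows, the top $\mu\times\mu$ block is $I_\mu$, the row just below is $(\gamma_{\mu,0}\cdots\gamma_{\mu,\mu-1})=(a_0\cdots a_{\mu-1})$, and the remaining rows collect the $\gamma_{R,i}$ with $R>\mu$. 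Thus $\operatorname{rcef}(W^t)$ is exactly the asserted $(t+1)\times\mu$ block, with $H'$ the $(t-\mu)\times\mu$ matrix $(\gamma_{R,i})_{\mu<R\le t}$.

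It remains to pass from $W^t$ to $S^t$. By the column-reduction argument already used in the proof of Lemma \ref{rankES}, columns $\mu+1,\dots,t$ of $E^t$ are combinations of its first $\mu$ columns, which are $E^\mu$; hence $E^t$ and $E^\mu$ have the same column space, and therefore $\operatorname{colspace}(S^t)=\Sigma^t\operatorname{colspace}(E^t)=\Sigma^t\operatorname{colspace}(E^\mu)=\operatorname{colspace}(W^t)$. As $S^t$ has $t$ columns and rank $\mu$, its reduced column echelon form is $\operatorname{rcef}(W^t)$ bordered by $t-\mu$ zero columns, which is precisely the claimed matrix. The one delicate point — the main obstacle — is this translation between $\operatorname{rcef}$, which only records column-space data, and the \emph{row} relations inside $\Sigma E^\mu$; once one commits to the matrix $W=\Sigma E^\mu$ and uses $\deg\rho=\mu$ both to pin down the independence of $W_0,\dots,W_{\mu-1}$ and to read off $W_\mu=\sum a_iW_i$, everything else is routine bookkeeping with echelon forms.
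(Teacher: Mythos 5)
Your proposal is correct, but it reorganizes the argument in a way that differs noticeably from the paper's. The paper works with echelon forms and kernels: it first uses Lemma \ref{rankES} to reduce $\operatorname{rcef}(S^t)$ to $\operatorname{rcef}(S^\mu)$, identifies the latter with the top $t+1$ rows of $\operatorname{rcef}(\Sigma E^\mu)$, and then characterizes $\tovector(\rho)$ as the unique (up to scalar) solution of an augmented homogeneous system built from $\operatorname{rcef}(\Sigma E^\mu)$; the shape $\left(\begin{smallmatrix} I_\mu \\ a_0\cdots a_{\mu-1}\end{smallmatrix}\right)$ of the top block is then forced by a pivot argument (a pivot-free row above the last one would yield a nonzero element of $\mathcal{R}\rho$ of degree $<\mu$). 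You instead read everything off the row relations of $W=\Sigma E^\mu$: minimality of $\deg\rho=\mu$ gives the independence of $W_0,\dots,W_{\mu-1}$, the kernel vector $\tovector(\rho)$ gives $W_\mu=\sum_i a_iW_i$, and the column space of the truncation $\Sigma^t E^\mu$ becomes an explicit ``graph'' subspace whose reduced column echelon basis you can write down and identify with the $\operatorname{rcef}$ by uniqueness. Your passage from $\Sigma^t E^\mu$ to $S^t$ also differs slightly: you argue that $E^t$ and $E^\mu$ have the same column space via the $\sigma^{-1}$ argument (reusing the proof, not just the statement, of Lemma \ref{rankES}), where the paper simply invokes the rank equalities $\rank S^t=\rank S^\mu=\mu$. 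Both routes rest on the same two inputs, Proposition \ref{Vskewcode} and the minimality of $\deg\rho$ in $\mathcal{R}\rho$; yours has the merit of producing the echelon form explicitly rather than deducing its shape indirectly, at the small cost of assuming up front the (harmless, since the degree-$\mu$ generator is unique up to a nonzero scalar) normalization that $\rho$ is monic, which the paper instead obtains as part of the conclusion.
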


\begin{proof}
By Lemma \ref{rankES}, \(\rank S^t = \mu = \rank S^\mu\), so
\[
\operatorname{rcef}(S^t) = \left( \begin{array}{c|c} \operatorname{rcef}(S^\mu) & 0_{(t+1)\times(t-\mu)} \end{array} \right).
\] 
Recall that \(S^\mu\) consists of the first \(t+1\) rows of \(\Sigma E^\mu\) and both have the same rank \(\mu\), therefore \(\operatorname{rcef}(S^\mu)\) is composed by the first \(t+1\) rows of \(\operatorname{rcef}(\Sigma E^\mu)\). By Proposition \ref{Vskewcode}, \(\tovector(\mathcal{R}\rho)\) is the left kernel of the matrix \(\operatorname{rcef}(\Sigma E^\mu)\). A non zero solution of the homogeneous system
\begin{equation}\label{rhoeq1}
X \left( \begin{array}{c|c}
\operatorname{rcef}(\Sigma E^\mu) &
\begin{array}{c}
0 \\ \hline I_{n-(\mu+1)}
\end{array}
\end{array}\right) = 0
\end{equation}
is a non zero element of \(\tovector(\mathcal{R}\rho)\) whose $n-(\mu+1)$ last coordinates are zero. Since \(\rho\) has degree \(\mu\), and its degree is minimal in \(\mathcal{R}\rho\), it follows that \(\tovector(\rho)\) is the unique solution, up to scalar multiplication, of \eqref{rhoeq1}. Let \(S^\mu_0\) be formed by the first \(\mu+1\) rows of $S^\mu$. Then 
\[
\operatorname{rcef}(S^\mu) = \left( \begin{array}{c} \operatorname{rcef}(S^\mu_0) \\ \hline H' \end{array} \right).
\]
Further column reductions using the identity matrix in the right block of the matrix in \eqref{rhoeq1}, allow us to see that \(\rho\) is also the non zero solution, up to scalar multiplication, of the homogeneous system 
\begin{equation}\label{rhoeq2}
X \left( \begin{array}{c|c}
\operatorname{rcef}(S^\mu_0) & 0 \\
\hline
0 & I_{n-(\mu+1)}
\end{array}\right) = 0.
\end{equation}
The size of \(\operatorname{rcef}(S^\mu_0)\) is \((\mu+1)\times\mu\). Moreover \(\rank \operatorname{rcef}(S^\mu_0) = \mu\) because the space of solutions of \eqref{rhoeq2} has dimension \(1\). So there is only one row of \(\operatorname{rcef}(S^\mu_0)\) without a pivot. If this row is not the last row then there would be a non zero polynomial in \(\mathcal{R}\rho\) of degree strictly below \(\mu\), which is impossible. Hence 
\[
\operatorname{rcef}(S^\mu_0) = \left(\begin{array}{c} I_\mu \\ \hline a_0 \cdots a_{\mu-1} \end{array}\right).
\]
Finally, \((-a_0, \dots, -a_{\mu-1},1,0, \dots, 0)\) is a non zero solution of \eqref{rhoeq2}, it follows \(\rho = x^\mu - \sum_{i=0}^{\mu-1} a_i x^i\).
\end{proof}

\begin{lemma}\label{Rrhocomputation}
If the left ideal \(\mathcal{R}\rho\) corresponds, via $\mathfrak{v}$, to the left kernel of a matrix $H$, then \(H = \Sigma B\) for some \(B \in \matrixring{\nu \times \mu}{\bfield}\) which has no zero row. 
\end{lemma}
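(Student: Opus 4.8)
The plan is to split the statement into two parts: that $H$ admits \emph{some} factorization $H = \Sigma B$, and that the resulting (necessarily unique) matrix $B$ has no zero row. The second part is the substantive one; the first is essentially bookkeeping.

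For the factorization I would start from Proposition \ref{Vskewcode}, which gives $\mathcal{R}\lambda \subseteq \mathcal{R}\rho$. Since $\mathfrak{v}(\mathcal{R}\lambda)$ is the left kernel of $\Sigma$ and, by hypothesis, $\mathfrak{v}(\mathcal{R}\rho)$ is the left kernel of $H$, this inclusion says precisely that the left kernel of $\Sigma$ is contained in the left kernel of $H$. Since the column space of a matrix with $n$ rows is exactly the set of vectors of $\bfield^n$ annihilated by its left kernel (for the standard bilinear form), it follows that the column space of $H$ is contained in the column space of $\Sigma$; hence each column of $H$ is an $\bfield$-linear combination of the columns of $\Sigma$, which yields $H = \Sigma B$. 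Because the columns of $\Sigma$ are $\bfield$-linearly independent by Lemma \ref{circulantlemma} (equivalently $\rank\Sigma = \nu$, as already used in the proof of Lemma \ref{ranksrelated}), this $B$ is uniquely determined, and comparing sizes gives $B \in \matrixring{\nu \times \mu}{\bfield}$.

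For the no-zero-row part I argue by contradiction: suppose the $j$th row of $B$ is zero. Then in $H = \Sigma B$ every column of $H$ is a combination of the columns of $\Sigma$ \emph{other than the $j$th}; let $U$ be their $\bfield$-span. The key point is that, by Proposition \ref{Vskewcode}, $\mathfrak{v}(\mathcal{R}\rho)$ is not merely some abstract left kernel but concretely the left kernel of $\Sigma E^\mu$; hence $\Sigma E^\mu$ and $H$ have the same left kernel and therefore the same column space, so every column of $\Sigma E^\mu$ also lies in $U$. Now the $m$th column of $\Sigma E^\mu$ equals $\sum_{i=1}^{\nu}(E^\mu)_{im}$ times the $i$th column of $\Sigma$, and the columns of $\Sigma$ are linearly independent; so membership in $U$ forces $(E^\mu)_{jm} = 0$ for every $m$, i.e. the $j$th row of $E^\mu$ vanishes. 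But that row is $(e_j, \sigma^{-1}(e_j), \dots, \sigma^{-\mu+1}(e_j))$, and $\sigma$ is injective, so this would force $e_j = 0$, contradicting that $e_j$ is one of the (nonzero) error values. Hence no row of $B$ is zero.

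The step I expect to require the most care is precisely this last reduction: one cannot treat $H$ as just ``a matrix whose left kernel is $\mathfrak{v}(\mathcal{R}\rho)$'', because that carries too little information to detect which columns of $\Sigma$ are genuinely used; one must bring in the explicit realization of this kernel as the left kernel of $\Sigma E^\mu$ from Proposition \ref{Vskewcode}, which is exactly what converts ``$B$ has a zero $j$th row'' into ``$E^\mu$ has a zero $j$th row'', and the latter is immediately absurd. The remaining ingredients (kernel containment versus column-space containment, uniqueness of $B$, linear independence of the columns of $\Sigma$) are routine linear algebra backed by the Circulant Lemma.
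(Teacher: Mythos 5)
Your proof is correct and follows essentially the same route as the paper: both obtain the factorization $H=\Sigma B$ from the kernel inclusion $\mathfrak{v}(\mathcal{R}\lambda)\subseteq\mathfrak{v}(\mathcal{R}\rho)$ together with $\rank\Sigma=\nu$ (Circulant Lemma), and both transfer the no-zero-row property from $E^\mu$ to $B$ by using that $\mathfrak{v}(\mathcal{R}\rho)$ is also the left kernel of $\Sigma E^\mu$ (Proposition \ref{Vskewcode}). The only difference is presentational: the paper packages this as a commutative diagram and produces $B=E^\mu P$ with $P$ invertible, whereas you work with column spaces and a direct contradiction, but the substance is the same.
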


\begin{proof}
The statement comes from the commutative diagram of $\bfield$-vector spaces
$$\xymatrix{0 \ar[r] & \mathcal{R}\rho \ar[r] & \mathcal{R} \ar[r]^-{\cdot H} & \mathcal{R}/\mathcal{R}\rho \ar[r] & 0 \\
0 \ar[r] & \mathcal{R}\lambda \ar[r] \ar@{^{(}->}[u] & \mathcal{R}\ar@{=}[u] \ar[r]^-{\cdot \Sigma} & \mathcal{R}/\mathcal{R}\lambda \ar[r] \ar@{-->}[u]_-{\cdot B}& 0 \\
}$$
If $\mathcal{R}\rho$ corresponds to the left kernel of a matrix $H$, there exists a surjective $\bfield$-linear map $\mathcal{R}/\mathcal{R}\lambda\to \mathcal{R}/\mathcal{R}\rho$ defined by right multiplication by a $(\nu\times \mu)$-matrix $B$, such that $\Sigma B=H$. Since $\mathcal{R}\rho$ is also the left kernel of $\Sigma E^\mu$, there exists a non singular $(\mu\times \mu)$-matrix $P$ such that $\Sigma E^\mu P=\Sigma B$. Since $\Sigma$ defines a surjective linear map, then $E^\mu P=B$. Finally, $B$ is obtained from $E^\mu$ by elementary operations on the columns. Since $E^\mu$ has no zero row, $B$ is so.
\end{proof}

There is a strong connection between \(\rho\) and \(\lambda\), since the error locator polynomial is minimal for \(\rho\) in the following sense. 

\begin{proposition}\label{lambdaminimalfullydecomposable}
Let $\lambda'\in\mathcal{R}$ be a fully $\beta$-decomposable polynomial which is a left multiple of $\rho$. Then $\lambda \mid_r \lambda'$.
\end{proposition}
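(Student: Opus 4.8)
The plan is to translate every divisibility assertion into an inclusion of column spaces of explicit matrices over $\bfield$, and then let the linear independence furnished by the Circulant Lemma close the argument. Write $\alpha_i=\sigma^i(\alpha)$ for the fixed normal basis, $K=\{k_1,\dots,k_\nu\}$, and recall that $\tovector(\mathcal{R}\lambda)$ is the left kernel of the matrix $\Sigma$, whose $j$th column is exactly $\alpha_{k_j}^{[\sigma]}$. Since $\lambda'$ is fully $\beta$--decomposable there is a subset $T'\subseteq\{0,\dots,n-1\}$ with $\lambda'=\lclm{\{x-\sigma^i(\beta)\}_{i\in T'}}$, which by Lemma \ref{degreelclm} is precisely the index set of the $\beta$--roots of $\lambda'$. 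Now \eqref{findingroots} gives, for $f=\sum_k f_kx^k$, that $f\in\mathcal{R}\lambda'$ if and only if $\tovector(f)\,\alpha_i^{[\sigma]}=0$ for every $i\in T'$; hence $\tovector(\mathcal{R}\lambda')$ is the left kernel of the matrix $\Sigma'$ whose columns are $\{\alpha_i^{[\sigma]}\}_{i\in T'}$. Because $\lambda\mid_r\lambda'$ means $\mathcal{R}\lambda'\subseteq\mathcal{R}\lambda$, i.e. the left kernel of $\Sigma'$ is contained in that of $\Sigma$, and this inclusion holds automatically once the columns of $\Sigma$ occur among those of $\Sigma'$, the whole proposition reduces to showing $K\subseteq T'$.

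Next I would encode the hypothesis the same way. By Proposition \ref{Vskewcode}, $\tovector(\mathcal{R}\rho)$ is the left kernel of $H:=\Sigma E^\mu$, and $E^\mu$ has no zero row, since its first column is $\transpose{(e_1,\dots,e_\nu)}$ and the error values $e_j$ are nonzero (one could also invoke Lemma \ref{Rrhocomputation}). The hypothesis $\rho\mid_r\lambda'$ reads $\mathcal{R}\lambda'\subseteq\mathcal{R}\rho$, i.e. the left kernel of $\Sigma'$ is contained in that of $H$. Over the field $\bfield$ the left kernel of a matrix is the orthogonal complement of its column space for the standard bilinear form on $\bfield^n$, an inclusion--reversing involution; dualizing therefore gives $\operatorname{col}H\subseteq\operatorname{col}\Sigma'=\langle\alpha_i^{[\sigma]}:i\in T'\rangle$. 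Trivially $\operatorname{col}H=\operatorname{col}(\Sigma E^\mu)\subseteq\operatorname{col}\Sigma=\langle\alpha_{k_j}^{[\sigma]}:1\le j\le\nu\rangle$ as well.

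The key step is then the following. By Lemma \ref{circulantlemma} applied with $t=n$, the full family $\{\alpha_0^{[\sigma]},\dots,\alpha_{n-1}^{[\sigma]}\}$ is a basis of $\bfield^n$, and for subsets $I,J$ of a basis one has $\langle I\rangle\cap\langle J\rangle=\langle I\cap J\rangle$; combining the two inclusions above yields $\operatorname{col}H\subseteq\langle\alpha_i^{[\sigma]}:i\in K\cap T'\rangle$. Assume, towards a contradiction, that $K\not\subseteq T'$ and choose $i_0$ with $k_{i_0}\in K\setminus T'$, so that $K\cap T'\subseteq K\setminus\{k_{i_0}\}$. Then each column $\sum_j(E^\mu)_{j\ell}\,\alpha_{k_j}^{[\sigma]}$ of $H=\Sigma E^\mu$ lies in $\langle\alpha_{k_j}^{[\sigma]}:j\neq i_0\rangle$, and since $\alpha_{k_1}^{[\sigma]},\dots,\alpha_{k_\nu}^{[\sigma]}$ are linearly independent this forces $(E^\mu)_{i_0,\ell}=0$ for all $\ell$; thus the $i_0$th row of $E^\mu$ vanishes, contradicting $e_{i_0}\neq 0$. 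Hence $K\subseteq T'$, and consequently $\lambda\mid_r\lambda'$.

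The one place I expect to need care is the bookkeeping in the first two paragraphs: pinning down the identification $\alpha_i=\sigma^i(\alpha)$ so that the columns of $\Sigma$, $\Sigma'$ and $H=\Sigma E^\mu$ all live inside the single basis $\{\alpha_i^{[\sigma]}\}_{0\le i\le n-1}$, and correctly matching ``fully $\beta$--decomposable with $\beta$--root index set $T'$'' with ``left kernel of the evaluation columns indexed by $T'$''. Once those identifications are in place, the intersection formula for coordinate subspaces together with the no--zero--row property of $E^\mu$ finishes the proof with no further computation.
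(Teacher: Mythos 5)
Your argument is correct, and it takes a genuinely different route from the paper's proof. The paper sets $\phi=\gcrd{\lambda,\lambda'}$, uses Lemma \ref{fullydecomposable} to see that $\phi$ is again fully $\beta$-decomposable, realizes $\mathcal{R}\phi$ and $\mathcal{R}\rho$ as left kernels of $\Sigma Q$ and $\Sigma QQ'=\Sigma B$ with $B$ of full rank and no zero row (Lemma \ref{Rrhocomputation}), and then applies the rank-counting Lemma \ref{q=p=m} to force every $\sigma^{k_j}(\beta)$ to be a $\beta$-root of $\phi$, so that $\phi=\lambda$ and hence $\lambda\mid_r\lambda'$. You bypass the gcrd and all three of those lemmas: you translate $\rho\mid_r\lambda'$ and Proposition \ref{Vskewcode} into inclusions of left kernels, dualize them through the nondegenerate standard bilinear form to obtain $\operatorname{col}(\Sigma E^\mu)\subseteq\operatorname{col}\Sigma'$ and $\operatorname{col}(\Sigma E^\mu)\subseteq\operatorname{col}\Sigma$, invoke the Circulant Lemma with $t=n$ so that $\{\sigma^i(\alpha)^{[\sigma]}\}_{0\le i\le n-1}$ is a basis of $\bfield^n$ and spans of subsets of it intersect along the intersection of index sets, and then let the no-zero-row property of $E^\mu$ (the nonvanishing error values) force $K\subseteq T'$, which immediately gives $\tovector(\mathcal{R}\lambda')\subseteq\tovector(\mathcal{R}\lambda)$, i.e. $\lambda\mid_r\lambda'$. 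The underlying engine is the same as the paper's, namely linear independence of the $\sigma^{k_j}(\alpha)^{[\sigma]}$ together with $e_j\neq 0$, but you inline the combinatorial content of Lemma \ref{q=p=m} for the specific matrices at hand, trading the paper's reusable rank lemma and gcrd bookkeeping for a shorter, more self-contained duality argument. Two details you handle correctly but should keep explicit in a final write-up: the reduction to $K\subseteq T'$ uses that the left kernel of $\Sigma'$ is exactly $\tovector(\mathcal{R}\lambda')$ (both inclusions, coming from \eqref{findingroots} and the lclm description, just as for $\Sigma$ and $\lambda$); and the final contradiction needs $\mu\ge 1$, which holds in the intended situation $\nu\ge 1$ because $E^1=\transpose{(e_1,\dots,e_\nu)}\neq 0$ and Lemma \ref{ranksrelated} give $\rank S^1=1$ --- the same implicit assumption the paper makes.
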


\begin{proof}
By Proposition \ref{Vskewcode},  $\rho \mid_r \lambda$  and, by hypothesis, $\rho \mid_r \lambda'$. Then $\rho \mid_r (\lambda,\lambda')_r$. Moreover, by Lemma \ref{fullydecomposable}, $(\lambda,\lambda')_r$ is fully $\beta$-decomposable. Let us denote $\phi =(\lambda,\lambda')_r$. We claim that $\phi=\lambda$, which implies the statement. 

Indeed, $\mathcal{R}\lambda \subseteq \mathcal{R}\phi$, then $\mathcal{R}\phi$ corresponds to the left kernel of $\Sigma Q$, where $Q$ is a full rank matrix. Analogously, $\mathcal{R}\phi\subseteq \mathcal{R}\rho$, so there exists $Q'$ of full rank such that $\mathcal{R}\rho$ is the left kernel of $\Sigma QQ'$.
By Lemma \ref{Rrhocomputation}, $\Sigma QQ'=\Sigma B$, where $B$ has full rank and no zero row. Hence, $QQ'=B$, because $\Sigma$ defines a surjective linear map, and $Q$ has no zero row.

Since \(\phi \mid_r \lambda\), any $\beta$-root of \(\phi\) must be a $\beta$-root of $\lambda$, so that it belongs to the set \(\{\sigma^{k_1}(\beta), \dots, \sigma^{k_\nu}(\beta)\}\). Observe that, by \eqref{findingroots}, \(\sigma^{k_j}(\beta)\) is a \(\beta\)-root of \(\phi\) if and only if 
\[
\rank \left( \begin{array}{c|c} \Sigma Q & \sigma^{k_j}(\alpha)^{[\sigma]} \end{array}\right) = \rank \Sigma Q.
\]
Hence, by Lemma \ref{q=p=m}, \(\phi\) being fully $\beta$-decomposable implies \(\{\sigma^{k_1}(\beta), \dots, \sigma^{k_\nu}(\beta)\}\) is the set of \(\beta\)-roots of \(\phi\). Thus \(\phi = \lambda\). 
\end{proof}

At this point we have developed all the ingredients needed to compute the error locator polynomial, which completes the steps for designing the Peterson-Gorenstein-Zierler algorithm for skew cyclic codes, see Algorithm \ref{PGZ}. 

\begin{algorithm}[H]
\caption{\texttt{PGZ decoding algorithm}}\label{PGZ}
\begin{algorithmic}[1]
\REQUIRE A received transmission \(y = \left (y_0,\ldots ,y_{n-1}\right ) \in L^n\) with no more than \(t\) errors.
\ENSURE The error \(e = \left (e_0,\ldots ,e_{n-1}\right )\) such that \(y-e \in \mathcal{C}\)
\FOR{\(0 \leq i \leq 2t-1\)}
	\STATE $s_i\gets\sum_{j=0}^{n-1} y_j \norm{j}{\sigma^i(\beta)}$
\ENDFOR
\IF{\(s_i = 0\) for all \(0 \leq i \leq 2t-1\)}
	\RETURN \(0\).
\ENDIF
\STATE \(S^t \gets \left( \sigma^{-j}(s_{i+j})\sigma^{i}(\alpha) \right)_{0 \leq i \leq t, 0 \leq j \leq t-1}\)
\STATE Compute \label{rcefSt}
\[
\operatorname{rcef}(S^t) = \left(\begin{array}{c|c} \begin{array}{c} I_\mu \\ \hline a_0 \cdots a_{\mu-1} \\ \hline H' \end{array} & 0_{(t+1)\times(t-\mu)} \end{array}\right).
\] 
\STATE \(\rho = (\rho_0, \dots, \rho_\mu) \gets (-a_0, \dots, -a_{\mu-1}, 1)\) and $\rho_N \gets (\rho_0,\ldots ,\rho_{\mu},0,\ldots,0) N$.\label{alg:rhocomput} \label{startfind}
\STATE $\{k_1,\ldots ,k_\nu\}\gets \text{ zero coordinates of } \rho_N$ \label{firstzerofind}
\IF{$\mu \neq \nu$} \label{mu<>nu}
	\STATE $M_\rho \gets \begin{pmatrix} \rho_0 & \rho_1 & \ldots & \rho_\mu & 0 & \ldots & 0\\
	0 & \sigma(\rho_0) & \ldots & \sigma(\rho_{\mu-1}) & \sigma(\rho_{\mu}) & \ldots & 0\\
	 &  & \ddots &  &  & \ddots & \\
	0 &   \ldots  & 0 & \sigma^{n-\mu-1}(\rho_0)& \ldots & \ldots  &\sigma^{n-\mu-1}(\rho_\mu)
	\end{pmatrix}_{(n-\mu)\times n}$ \label{startfind}
	\STATE $N_\rho \gets M_\rho N$
	\STATE $H_\rho \gets \operatorname{rref}(N_\rho)$ \label{rrefNrho}
	\STATE \(H'\) $\gets$ matrix obtained removing all rows of $H_\rho$ different from $\varepsilon_i$ for any $i$.\label{removefind}
	\STATE $\{k_1,\ldots ,k_\nu\}\gets \text{ zero column coordinates of } H'$ \label{endfind}
\ENDIF	 
\STATE Solve the linear system \((x_1, \dots, x_\nu)\left (\Sigma^{\nu-1}\right )^\intercal = (\alpha s_0, \sigma(\alpha) s_1, \dots, \sigma^{\nu-1}(\alpha) s_{\nu-1})\)\label{computerrorvalues}
\RETURN \((e_0,\ldots ,e_{n-1})\) with $e_i=x_i$ for $i\in\{k_1,\ldots, k_\nu\}$, and zero otherwise.
\end{algorithmic}
\end{algorithm}

\begin{theorem}
Let \(\bfield\) be a field, \(\sigma \in \Aut{\bfield}\) of order \(n\) and \(\bfield^\sigma\) the invariant subfield. Let the set \(\{\alpha, \sigma(\alpha), \dots, \sigma^{n-1}(\alpha)\}\) be a normal basis of \(\bfield\) over \(\subfield\) and \(\beta = \alpha^{-1}\sigma(\alpha)\). Let \(\mathcal{R} = \bfield[x;\sigma]/\langle x^n-1\rangle\), \(g = \lclm{x-\beta, \dots, x-\sigma^{\delta-2}(\beta)}\) and  \(\mathcal{C}\) the skew RS code such that \(\tovector^{-1}(\mathcal{C}) = \mathcal{R}g\). Then Algorithm \ref{PGZ} correctly finds the error \(e = (e_0, \dots, e_{n-1})\) of any received vector if the number of non zero coordinates of \(e\) is \(\nu \leq t = \left\lfloor \frac{\delta-1}{2} \right\rfloor\).
\end{theorem}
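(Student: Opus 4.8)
The plan is to check Algorithm \ref{PGZ} line by line against the structural results of this section. Write $e=e_1x^{k_1}+\dots+e_\nu x^{k_\nu}$ for the error, with $\nu\le t$ its number of nonzero entries and $\lambda=\lclm{x-\sigma^{k_1}(\beta),\dots,x-\sigma^{k_\nu}(\beta)}$ its error locator. First I would check the syndrome loop: by \eqref{eq1} the computed $s_i$ are the left remainders of $y$ by $x-\sigma^i(\beta)$ and depend only on $e$. If $s_i=0$ for all $0\le i\le 2t-1$, then $s_0=\dots=s_{\nu-1}=0$, so the right-hand side of the system in Proposition \ref{errorvalues} vanishes; its coefficient matrix being nonsingular by Lemma \ref{circulantlemma}, we get $(e_1,\dots,e_\nu)=0$, so returning $0$ is correct; conversely $\nu=0$ makes all $s_i$ vanish. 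Hence from now on assume $\nu\ge1$, so not all $s_i$ with $0\le i\le 2t-1$ are zero.

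Next I would treat the computation of $\mu$ and $\rho$. The matrix $S^t$ produced by the algorithm is the one defined above; by Lemmas \ref{ranksrelated} and \ref{rankES}, $\rank S^t=\mu$, and since every $s_i$ ($0\le i\le 2t-1$) occurs as an entry of $S^t$ up to an automorphism and a nonzero scalar, $S^t\neq0$, so $\mu\ge1$. By Lemma \ref{rhocomputation}, $\operatorname{rcef}(S^t)$ has exactly the displayed form, hence the integer $\mu$ and the monic polynomial $\rho=x^\mu-\sum_{i=0}^{\mu-1}a_ix^i$ computed in lines \ref{rcefSt}--\ref{alg:rhocomput} are those of Proposition \ref{Vskewcode}: $\deg\rho=\mu$, $\tovector^{-1}$ of the left kernel of $\Sigma E^\mu$ is $\mathcal{R}\rho$, and $\rho\mid_r\lambda$. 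By \eqref{evaluationmatrix}--\eqref{findingroots} the zero entries of $\rho_N=\tovector(\rho)N$ are exactly the positions of the $\beta$-roots of $\rho$, and since the least common left multiple of the linear factors right-dividing $\rho$ also right-divides $\rho$, Lemma \ref{degreelclm} bounds their number by $\mu$.

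Now split according to whether $\rho$ has $\mu$ $\beta$-roots or fewer. If it has $\mu$, then $\rho$ equals the least common left multiple of the linear factors at its $\beta$-roots (same degree, both monic), so $\rho$ fully $\beta$-decomposes; Proposition \ref{lambdaminimalfullydecomposable} with $\lambda'=\rho$ gives $\lambda\mid_r\rho$, whence $\rho=\lambda$, $\mu=\nu$, and the zero entries of $\rho_N$ returned in line \ref{firstzerofind} are $\{k_1,\dots,k_\nu\}$. Otherwise $\mu\neq\nu$ and the \textbf{if}-block runs. By Lemma \ref{rreffullydecomposable} the rows of $M_\rho$ span $\tovector(\mathcal{R}\rho)$, so those of $N_\rho=M_\rho N$ span $\tovector(\mathcal{R}\rho)N$, and by Lemma \ref{tech} the rows of $H_\rho=\operatorname{rref}(N_\rho)$ equal to canonical vectors are exactly the $\varepsilon_i$ lying in $\tovector(\mathcal{R}\rho)N$; call $J$ this index set, so the nonzero columns of $H'$ are precisely $J$. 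The key claim is $J=\{0,\dots,n-1\}\setminus\{k_1,\dots,k_\nu\}$. One inclusion: $\rho\mid_r\lambda$ gives $\mathcal{R}\lambda\subseteq\mathcal{R}\rho$, hence $\tovector(\mathcal{R}\lambda)N\subseteq\tovector(\mathcal{R}\rho)N$, and Lemma \ref{rreffullydecomposable} applied to the fully $\beta$-decomposable $\lambda$ identifies $\tovector(\mathcal{R}\lambda)N$ with the span of the $\varepsilon_i$, $i\notin\{k_j\}$; so all such $i$ lie in $J$. Conversely, set $\tilde\lambda=\lclm{\{x-\sigma^i(\beta)\}_{i\notin J}}$, non-constant of degree below $n$ because $1\le n-\nu\le|J|\le\dim\tovector(\mathcal{R}\rho)N=n-\mu$; then Lemma \ref{rreffullydecomposable} gives $\tovector(\mathcal{R}\tilde\lambda)N=\operatorname{span}\{\varepsilon_i:i\in J\}\subseteq\tovector(\mathcal{R}\rho)N$, so $\rho\mid_r\tilde\lambda$, and since $\tilde\lambda$ fully $\beta$-decomposes, Proposition \ref{lambdaminimalfullydecomposable} yields $\lambda\mid_r\tilde\lambda$, i.e. $\tovector(\mathcal{R}\tilde\lambda)N\subseteq\tovector(\mathcal{R}\lambda)N$, forcing $J\subseteq\{0,\dots,n-1\}\setminus\{k_j\}$. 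Hence line \ref{endfind} recovers exactly the error positions.

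Finally, with $\{k_1,\dots,k_\nu\}$ identified as the true error positions, Proposition \ref{errorvalues} — whose coefficient matrix $(\Sigma^{\nu-1})^{\intercal}$ is nonsingular by Lemma \ref{circulantlemma} — shows that the unique solution of the system in line \ref{computerrorvalues} is $(e_1,\dots,e_\nu)$, so the returned vector is $e$ and $y-e\in\mathcal{C}$. I expect the main obstacle to be the branch $\mu\neq\nu$: one must be sure that replacing $\mathcal{R}\rho$ by its fully $\beta$-decomposable hull $\mathcal{R}\tilde\lambda$ loses no error position, and this is exactly where the minimality of $\lambda$ among fully $\beta$-decomposable left multiples of $\rho$ (Proposition \ref{lambdaminimalfullydecomposable}) must be combined carefully with the dictionary between $\beta$-roots and zero columns (Lemma \ref{rreffullydecomposable}) and between lying in a row space and being a row of its reduced echelon form (Lemma \ref{tech}).
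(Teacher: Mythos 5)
Your proof is correct and follows essentially the same route as the paper: it relies on the same chain of results (Proposition \ref{Vskewcode}, Lemma \ref{rhocomputation}, Lemma \ref{rreffullydecomposable}, Lemma \ref{tech}, Proposition \ref{lambdaminimalfullydecomposable}, Proposition \ref{errorvalues}), and your two-inclusion identification of the index set $J$ is just a reorganization of the paper's argument, where your $\tilde\lambda$ is the paper's $\lambda'$ and your first inclusion plays the role of the paper's contradiction step via Lemma \ref{tech}. The only additions are your explicit treatment of the all-zero syndrome case and of $\mu \geq 1$, which the paper leaves implicit.
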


\begin{proof}
After the initial settings, Line \ref{alg:rhocomput} computes a right divisor \(\rho = \sum_{i=0}^\mu \rho_i x^i\) of the error locator \(\lambda\) by Proposition \ref{Vskewcode} and Lemma \ref{rhocomputation}. 

By \eqref{findingroots}, Line \ref{firstzerofind} computes all the $\beta$-roots of \(\rho\). By Lemma \ref{degreelclm}, \(\nu = \mu\) if and only if \(\rho\) is fully $\beta$-decomposable. In this case, by Proposition \ref{lambdaminimalfullydecomposable}, \(\rho = \lambda\). 

If $\nu\not = \mu$, since \(\deg \rho = \mu\), the rows of \(M_\rho\) generate \(\mathcal{R}\rho\) as an \(\bfield\)--vector space, and the rows of \(N_\rho\) also generates \(\mathcal{R}\rho\) under the change of basis corresponding to \(N\). Since \(H_\rho\) is the reduced row echelon form of \(M_\rho\), then its rows are also a basis of \(\mathcal{R}\rho\) as an \(\bfield\)--vector space. By Lemma \ref{rreffullydecomposable}, the rows of \(H'\) generate an \(\bfield\)--vector subspace \(\mathcal{R}\lambda'\) for some fully $\beta$-decomposable polynomial \(\lambda'\). Since \(H'\) is obtained removing some rows of \(H_\rho\), it follows that \(\rho \mid_r \lambda'\). 

Let us now prove that $\lambda'=\lambda$. By Proposition \ref{lambdaminimalfullydecomposable}, \(\lambda \mid_r \lambda'\). Suppose that $\lambda \not= \lambda'$, then the matrix \(H_\lambda = \operatorname{rref}(M_\lambda N)\) contains an additional row $\varepsilon_d$ not in \(H'\). Since $\mathcal{R}\lambda\subseteq \mathcal{R}\rho$, $\rank \left (\begin{array}{c} H_\rho \\ \hline \varepsilon_d \end{array} \right )=\rank (H_\rho)$. By Lemma \ref{tech}, \(\varepsilon_d\) is a row of \(H_\rho\), so it is not removed in Line \ref{removefind} of Algorithm \ref{PGZ}. Hence \(\varepsilon_d\) belongs to \(H'\), a contradiction. Thus $\lambda=\lambda'$ and the error positions are computed. By Proposition \ref{errorvalues}, Line \ref{computerrorvalues} computes the error values.
\end{proof}

\begin{remark}
The complexity of Algorithm \ref{PGZ} is dominated by the computation of the reduced echelon forms in Lines \ref{rcefSt} and \ref{rrefNrho}. The theoretical efficiency of these lines belong to \(\mathcal{O}(t^3)\) and \(\mathcal{O}(n^3)\), respectively. Since, in the worst case, $t\approx n/2$, the complexity of Algorithm \ref{PGZ} is in $\mathcal{O}(n^3)$. Nevertheless, in most cases, the \textbf{if} part (Lines \ref{startfind}-\ref{endfind}) is not executed. Indeed, $\mu$ is not the true number of errors $\nu$ if and only if the determinant 
\[
|E^\nu|=\left | \begin{matrix} e_1 & \sigma^{-1}(e_1) & \cdots & \sigma^{-\nu+1}(e_1) \\
e_2 & \sigma^{-1}(e_2) & \ldots & \sigma^{-\nu+1}(e_2) \\
\vdots & \vdots & \ddots & \vdots \\
e_\nu & \sigma^{-1}(e_\nu) & \cdots & \sigma^{-\nu+1}(e_\nu)
\end{matrix} \right |=0
\]
Therefore, taking coordinates with respect to a fixed basis of $\bfield$ over $\bfield^\sigma$, we deduce that the set of errors $\{ e_1, \dots, e_\nu \}$ which yields $|E^\nu|=0$ is contained in the determinantal algebraic subvariety of $(\bfield^\sigma)^{\nu n}$ determined by the common zeroes of all $\nu \times \nu$ minors. The  dimension of this variety is known to be at most $n-\nu + 1$ (see, e.g., \cite[Exercise 10.10]{Eisenbud:1995}), which is strictly smaller than $\nu n=\mathrm{dim}_{\bfield^\sigma}\bfield^\nu$ if $\nu > 1$.   
\end{remark}

\section{Decoding skew RS codes}\label{secexam}

In this section we illustrate the scope of application of Algorithm \ref{PGZ} with some examples. The calculations have been made with the aid of the mathematical software Sagemath \cite{sage}.

\subsection{$\sigma$-Cyclic codes}

Our algorithm can be applied to some of the $\sigma$-cyclic codes defined in \cite{Boucher/Geiselmann/Ulmer:2007}. These are defined as left ideals of the factor algebra $\mathbb{F}_q[x;\sigma]/\langle x^n-1\rangle$, where $\mathbb{F}_q$ is the finite field of $q$ elements and the order of $\sigma$ divides $n$. So Algorithm \ref{PGZ} can be applied whenever $n$ is the order of $\sigma$, as the following example shows.

\begin{example}
Let $\mathbb{F}=\mathbb{F}_2(a)$ be the field with $2^{12}$ elements, where $a^{12}+a^7 + a^6 + a^5 + a^3 + a + 1=0$. Except for 0 and 1, we write the elements of $\mathbb{F}$ as powers of $a$. Consider $\sigma:\mathbb{F}\to \mathbb{F}$ defined by $\sigma=\tau^{10}$, where \(\tau\) is the Frobenius automorphism, i.e. $\sigma(a)=a^{1024}$. The order of $\sigma$ is 6, so a skew cyclic code over $\mathbb{F}$ is  a left ideal of the quotient algebra $\mathcal{R}=\mathbb{F}[x;\sigma]/\langle x^6-1\rangle$.
Let now take $\alpha=a$, which provides a normal basis of $\mathbb{F}$ as an $\mathbb{F}^\sigma$-vector space, and $\beta=\sigma(a)a^{-1}=a^{1023}$. Then, the images of $\beta$ under the powers of $\sigma$ give us the set $\{a^{1023},a^{3327},a^{3903},a^{4047},a^{4083},a^{4092}\}$. We may also construct the matrix of the change of basis
\[
N=\left(\begin{array}{cccccc}
1 & 1 & 1 & 1 & 1 & 1 \\
a^{1023} & a^{3327} & a^{3903} &
a^{4047} & a^{4083} & a^{4092} \\
a^{255} & a^{3135} & a^{3855}  &
a^{4035}  & a^{4080} & a^{1020}  \\
a^{63} & a^{3087} & a^{3843} & a^{4032} &
a^{1008}  & a^{252}  \\
a^{15} & a^{3075}  & a^{3840}  & a^{960}  & a^{240} & a^{60}  \\
a^{3} & a^{3072}  & a^{768}  & a^{192}   & a^{48} & a^{12} \end{array}\right).
\]
Let us consider the skew RS code generated by \[g=\lclm{x-a^{1023},x-a^{3327},x-a^{3903},x-a^{4047}}=x^4+a^{2103}x^3+a^{687}x^2+a^{1848}x+a^{759}.\] By Theorem \ref{MDScode}, it has Hamming distance 5, so, following  Algorithm \ref{PGZ}, it can correct until 2 errors. The reader may check easily that this code is not cyclic in the usual sense. Suppose then we need to send the message $m=x+a$, so the encoded polynomial to be transmitted is $c=mg=x^5+a^{3953}x^4+a^{1333}x^3+a^{2604}x^2+a^{1596}x+a^{760}$. After the transmission, we receive a polynomial $y=x^5+a^{3953}x^4+a^{671}x^3+a^{2604}x^2+a^{1596}x+a^{3699}$, i.e. there are two errors at positions 0 and 3. Actually, we have added the error $e=a^2+a^3x^3$.

We first calculate the full matrix of syndromes 
$$\left(\begin{array}{ccc}
a^{3170} & a^{2390}\\
a^{2645}  & a^{428}\\
a^{107} &  a^{248}
\end{array}\right),$$
and its reduced column echelon form
$$\left(\begin{array}{ccc}
1 & 0\\
0 & 1\\
a^{1950} &  a^{3315}
\end{array}\right).$$
Therefore, the rank of the matrix is two, and the monic polynomial in the left kernel is 
$\rho=x^2+a^{3315}x+a^{1950}$, and $\rho N=(0,a^{210},a^{2685},0,a^{1155},a^{3945})$. So $\rho$ is the error locator polynomial and the error positions are 0 and 3. Now, in order to compute the error values, we need to solve the system
$$(e_0,e_3)\left(\begin{array}{cc}
a & a^{1024}  \\
a^{64}  & a^{16} 
\end{array}\right)=\left(a^{3170} ,\,a^{2645} \right)
,$$
which yields $e_0=a^2$ and $e_3=a^3$, as expected.

Let us now suppose that we receive a polynomial $y=x^5+a^{3953}x^4+a^{671}x^3+a^{2604}x^2+a^{1596}x+a^{3699}$, that is, we have added the error $e=a^2+a^{1367}x^3$. In this case, the syndrome matrix and its reduced column echelon form are
$$\left(\begin{array}{ccc}
a^{59} & a^{65} \\
a^{1040} & a^{1046} \\
a^{2309} & a^{2315} 
\end{array}\right) \text{ and }
\left(\begin{array}{ccc}
1& 0 \\
a^{981} & 0 \\
a^{2250} & 0
\end{array}\right),
$$
respectively. Then $\rho=x+a^{981}$, and $\rho N=(a^{1437},a^{1281},a^{4053},a^9,a^{3149},a^{3853})$. Therefore, $\rho$ is not the error locator polynomial. We then compute the matrices
$$M_\rho=\left(\begin{array}{cccccc}
a^{981}  & 1 & 0 & 0
& 0 & 0 \\
0 & a^{1269} & 1
& 0 & 0 & 0 \\
0 & 0 & a^{1341}  & 1 & 0
& 0 \\
0 & 0 & 0 & a^{1359}
& 1 & 0 \\
0 & 0 & 0 & 0 & a^{3411} & 1
\end{array}\right)$$ and $$N_\rho=M_\rho N=\left(\begin{array}{rrrrrr}
a^{1437} & a^{1281}& a^{4053}& a^{9}& a^{3149}& a^{3853} \\
a^{2406}& a^{576}& a^{1845}& a^{978}& a^{1799}& a^{1984}\\
a^{3672}& a^{3471}& a^{1293}& a^{2244}& a^{3509}& a^{493}\\
a^{1941}& a^{3171}& a^{1155}& a^{513}& a^{1889}& a^{1144}\\
a^{2532}& a^{3096}& a^{3168}& a^{1104}& a^{1484} & a^{283}\\
\end{array}\right).$$
Now, the reduced row echelon form of $N_\rho$ is as follows      	
$$H_\rho=\left(\begin{array}{cccccc}
1 & 0 & 0 & a^{2667}  & 0 &
0 \\
0 & 1 & 0 & 0 & 0 & 0 \\
0 & 0 & 1 & 0 & 0 & 0 \\
0 & 0 & 0 & 0 & 1 & 0 \\
0 & 0 & 0 & 0 & 0 & 1
\end{array}\right).$$ If we remove the first row, we find zero columns at positions 0 and 3, i.e. the error positions.  Finally, we solve the system 
$$(e_0,e_3) \left(\begin{array}{cc}
a & a^{1024}  \\
a^{64}  & a^{16} 
\end{array}\right)=\left(a^{59} ,\,a^{1040} \right)
,$$
which yields $e_0=a^2$ and $e_3=a^{1367}$.
\end{example}

\subsection{Skew cyclic convolutional codes}

In \cite{GLNSCCC} it is taken into consideration a novel approach to cyclicity for convolutional codes by introducing the so-called skew cyclic convolutional codes (SCCCs). This perspective considers the embedding of a polynomial ring $\mathbb{F}[z]$ into its field of fractions $\mathbb{F}(z)$, so that SCCCs are skew cyclic codes over \(\field[](z)\).

\begin{example}
Let $\mathbb{F}=\mathbb{F}_4=\{0,1,a,a^2\}$ be the field with four elements, $\mathbb{F}(z)$ the field of rational functions over $\mathbb{F}$ and $\sigma:\mathbb{F}(z)\to \mathbb{F}(z)$ the automorphism of order 5 defined by $\sigma(z)=(z+a)/(z+a^2)$. The working ring is  $\mathcal{R}=\mathbb{F}(z)[x;\sigma]/\langle x^5-1 \rangle$. Let us consider $\alpha=z\in \mathbb{F}(z)$, $\beta=\alpha^{-1}\sigma(\alpha)=(z + a)/(z^2 + a^2z)$, and  the skew cyclic code generated by $g=\lclm{x-\beta,x-\sigma(\beta),x-\sigma^2(\beta),x-\sigma^3(\beta)}$. Concretely, 
{\small{$$g=x^{4} + \left(\tfrac{z + a}{z^{5} + a^2 z}\right) x^{3} +
\left(\tfrac{a z^{5} + a^2 z^{4} + a z + a^2}{z^{5} +
a^2 z^{4} +a^2 z + a}\right) x^{2} +
\left(\tfrac{a^2 z^{5} + z^{4} + z + a}{z^{4} + a^2
}\right) x + \tfrac{a z^{5} + a^2 z^{4}}{a^2 z^{5} + a^2 z^{4} + a z + a}.$$}}
So it can correct up to two errors. Suppose that $g$ is transmitted and we receive the polynomial $y$ given by
{\small{$$x^{4} + \left(\tfrac{1}{z^{4} + a^2}\right) x^{3} +
\left(\tfrac{a z^{5} + a^2 z^{4} + a z + a^2}{z^{5} +
a^2 z^{4} +a^2 z + a}\right) x^{2} +
\left(\tfrac{a^2 z^{6} + z^{5} + z^2 + az+1}{z^{5} + a^2z
}\right) x + \tfrac{a z^{5} + a^2 z^{4}}{a^2 z^{5} + a^2 z^{4} + a z + a},$$}}
i.e., y=g+e, where $e=a/(z^5+a^2z)x^3+1/(z^5+a^2z)x$. We follow Algorithm \ref{PGZ} and compute the matrix of syndromes,
$$\left(\begin{array}{cc}
\frac{a^2 z^{2} + a z + a^2}{a^2 z^{7} +
a^2 z^{6} + a^2 z^{5} + a z^{3} + a z^{2}
+ a z} & \frac{a z^{7} +a^2 z^{6} + a^2 z^{5} + a z^{4} + a z^{3} + a^2 z^{2} + a^2 z + a}{z^{3} + a z^{2} + a z + a^2} \\
\frac{z^{2} + z + a^2}{a z^{7} + a z^{6} + z^{3} + z^{2}} &
\frac{a z^{6} + a z^{5} + z^{4} + a z^{2} + a z + 1}{a z^{2} + z} \\
\frac{a^2 z^{2} + z + a^2}{a z^{6} + a^2
z^{5} + z^{2} + a z} & \frac{a z^{7} + z^{6} + z^{5} + a z^{4} + a
z^{3} + z^{2} + z + a}{a^2 z^{2} + a^2 z +
a^2}\end{array}\right)$$
and its reduced column echelon form
$$\left(\begin{array}{cc}
1 & 0 \\
\frac{a^2 z^{4} + a z^{2} + z + a}{z^{4} + a z^{3} + a
z^{2} + z} & 0 \\
\frac{a z^{3} + a z^{2} + 1}{z^{2} + a^2 z + 1} & 0
\end{array}\right).$$
Therefore,
$$\rho=       	
x + \tfrac{a^2 z^{4} + a z^{2} + z + a}{z^{4} + a z^{3} + a
z^{2} + z} \equiv \left (\tfrac{a^2 z^{4} + a z^{2} + z + a}{z^{4} + a z^{3} + a
z^{2} + z},1,0,0,0\right).$$
In this case, the matrix $N$ is given as follows:
$$N=\left(\begin{array}{ccccc}
1 & 1 & 1 & 1 & 1 \\
\frac{z + a}{z^{2} + a^2 z} & \frac{a^2 z^{2} + z + a}{a z^{2} + a^2 z} &
\frac{z}{z^{2} + a^2 z + a} & \frac{z^{2} + z +
1}{a^2 z + a^2} & \frac{z^{2} + z}{a^2 z + a} \\
\frac{a z + a}{ z^{2}} & \frac{a^2 z + a}{a z^{2} + 1} & \frac{z^{2} + a^2
z}{a^2 z^{2} + a^2} & a^2 z^{2} + z
& \frac{z^{2} + a^2 z + a}{a^2 z^{2} +
1} \\
\frac{a^2}{a z^{2} + a^2 z} & \frac{a^2 z^{2} + 1}{z^{2} + a^2 z + a} &
\frac{z^{2}}{a z + a} & \frac{a^2 z^{2} + a}{z + a^2} & \frac{a^2 z^{2} + a^2}{z^{2} + a^2 z} \\
\frac{a^2 z + a}{z^{2} + z} & \frac{a^2 z^{2} + a z}{a^2 z + 1} & \frac{z^{2} + a
z}{a z^{2} +a^2 z + 1} & \frac{a z^{2} + z + a^2}{a z} & \frac{a^2 z + a^2}{z^{2} + z + 1}
\end{array}\right).$$
Now, $\rho  N$ has no zero coordinate, so $\rho$ is not the error locator polynomial. Following Algorithm \ref{PGZ},
$$M_\rho=\left(\begin{array}{ccccc}
\frac{a^2 z^{4} + a z^{2} + z + a}{z^{4} + a z^{3} + a
z^{2} + z} & 1 & 0 & 0 & 0 \\
0 & \frac{a z^{4} + z^{3} + z^{2} + a z}{a^2 z^{3} +
a z^{2} + a^2 z + a^2} & 1 & 0 & 0 \\
0 & 0 & \frac{a^2 z^{3} + a z^{2} + a}{z^{4} +
z^{2} + a^2 z + a^2} & 1 & 0 \\
0 & 0 & 0 & \frac{a^2 z^{4} + a^2 z^{3} +a^2 z^{2} + a z + 1}{a^2
z^{3} + a^2 z^{2} + z} & 1
\end{array}\right).$$
Hence, the reduced row echelon form of $M_\rho N$ is as follows:
$$H_\rho=\left(\begin{array}{ccccc}
1 & 0 & 0 & 0 & 0 \\
0 & 1 & 0 & \frac{a z^{2} + 1}{z + a^2} & 0 \\
0 & 0 & 1 & 0 & 0 \\
0 & 0 & 0 & 0 & 1
\end{array}\right).$$
If we remove the second row, the resultant matrix has a zero column at positions 1 and 3. Finally, we may find the error values by solving the linear system
$$(e_1,e_3) \left(\begin{array}{rr}
\frac{z + a}{z + a^2} & \frac{a z + a}{z}
\\
\frac{a }{ z + a} & \frac{a^2z + a}{z + 1}
\end{array}\right)=\left(\tfrac{a^2 z^{2} + a z + a^2}{a^2
z^{7} + a^2 z^{6} + a^2 z^{5} + a z^{3} +
a z^{2} + a z},\,\tfrac{z^{2} + z + a^2}{a z^{7} + a z^{6} + z^{3} +
z^{2}}\right),$$
which yields $e_1=1/(z^5 +a^2 z)$ and $e_3=a/(z^5 + a^2z)$.
\end{example}

\subsection{Skew cyclic codes over a cyclotomic field}

Here we show an additional example of a class of skew cyclic codes over a non-conventional field. The base field of this kind of codes is a cyclotomic field $\mathbb{Q}(\chi)$, where $\chi$ is an $n$th root of unity.

\begin{example}\label{cyclotomic}
Let $L = \mathbb{Q}(\chi)$, where $\chi$ is a primitive 7th root of unit, and $\sigma: L \rightarrow L$ defined by $\sigma(\chi) = \chi^3$.  In this case, the order of $\sigma$ is $6$. Let us set $\alpha=\chi$, $\beta=\alpha^{-1}\sigma(\alpha)=\chi^2$ and $\delta=5$, so that the corresponding skew RS code is generated by 
$$g =2 x^{4} + \left(-\chi^{5} - \chi^{3} - \chi^{2}\right) x^{3} +
\left(\chi^{3} + \chi + 1\right) x^{2} + \left(\chi^{5} + \chi^{4} +
1\right) x + \chi^{5} - \chi^{2} + \chi + 1.
$$  
Suppose that $g$ is transmitted, i.e. the message $m=1$ is sent, and we receive the polynomial
$$y=2 x^{4} + \left(-\chi^{5} - \chi^{3} - \chi^{2}\right) x^{3} +
\left(\chi^{3} + 2 \chi + 1\right) x^{2} + \left(\chi^{5} + \chi^{4} +
1\right) x + \chi^{5} - \chi^{2} + \chi + 1.$$
Since $t=2$, the syndrome matrix $S^t$, and its reduced column echelon form, are given by 
$$\left(\begin{array}{cc}
\chi^{3} & 1 \\
1 & \chi^{4} \\
\chi^{5} & \chi^{2}
\end{array}\right) \text{ and } 
\left(\begin{array}{cc}
1 & 0 \\
\chi^{4} & 0 \\
\chi^{2} & 0
\end{array}\right),
$$
respectively. Therefore $\mu=1$ and $\rho=x-\chi^4$. Now, the matrix of norms is
$$\left(\begin{array}{cccccc}
1 & 1 & 1 & 1 & 1 & 1 \\
\chi^{2} & b
& \chi^{4} & \chi^{5} & \chi & \chi^{3} \\
\chi & \chi^{3} & \chi^{2} & b & \chi^{4} & \chi^{5} \\
\chi^{5} & \chi & \chi^{3} & \chi^{2} & b & \chi^{4} \\
\chi^{3} & \chi^{2} &b & \chi^{4} & \chi^{5} & \chi \\
\chi^{4} & \chi^{5} & \chi & \chi^{3} & \chi^{2} &
b
\end{array}\right),$$
where $b=-\chi^{5} - \chi^{4} - \chi^{3} - \chi^{2} - \chi - 1$, so
$$\rho_N=\left(-\chi^{4} + \chi^{2},\,-\chi^{5} - 2 \chi^{4} - \chi^{3} -
\chi^{2} - \chi - 1,\,0,\,\chi^{5} - \chi^{4},\,-\chi^{4} +
\chi,\,-\chi^{4} + \chi^{3}\right).$$
By Algorithm \ref{PGZ}, there is a single error at position 2 whose value may be computed by solving the equation $e_1\chi^2=\chi^3$. That is, the error is $e=\chi x^2$.
\end{example}

\section*{Funding}

Research supported by grants MTM2016-78364-P, MTM2013-41992-P and TIN2013-41990-R
from Ministerio de Econom\'ia y Competitividad and from Fondo Europeo de
Desarrollo Regional FEDER.



\end{document}